\definecolor{darkgreen}{rgb}{0.0, 0.5, 0.0}
\def\expandafter\normalsize\expandafter{%
    \normalsize%
    \setlength\abovedisplayskip{3pt}%
    \setlength\belowdisplayskip{3pt}%
    \setlength\abovedisplayshortskip{3pt}%
    \setlength\belowdisplayshortskip{3pt}%
}
\setlist[itemize]{topsep=3pt}  
\setlist[enumerate]{topsep=3pt}
\begin{document}

\title{Translating Workflow Nets into the Partially Ordered Workflow Language}
\titlerunning{Translating Workflow Nets into POWL}

\author{Humam Kourani\inst{1,2}\orcidID{0000-0003-2375-2152} \and
Gyunam Park\inst{1}\orcidID{0000-0001-9394-6513} \and
Wil M.P. van der Aalst\inst{1,2}\orcidID{0000-0002-0955-6940}}
\authorrunning{H. Kourani et al.}
%
\institute{Fraunhofer Institute for Applied Information Technology FIT, Schloss Birlinghoven, 53757 Sankt Augustin, Germany\\
\email{\{humam.kourani,gyunam.park,wil.van.der.aalst\}@fit.fraunhofer.de} \and
RWTH Aachen University, Ahornstraße 55, 52074 Aachen, Germany}
\maketitle              
\begin{abstract}
The Partially Ordered Workflow Language (POWL) has recently emerged as a process modeling notation, offering strong quality guarantees and high expressiveness. However, its adoption is hindered by the prevalence of standard notations like workflow nets (WF-nets) and BPMN in practice. This paper presents a novel algorithm for transforming safe and sound WF-net into equivalent POWL models. The algorithm recursively identifies structural patterns within the WF-net and translates them into their POWL representation. We formally prove the correctness of our approach, showing that the generated POWL model preserves the language of the input WF-net. Furthermore, we demonstrate the high scalability of our algorithm, and we show its completeness on a subclass of WF-nets that encompasses equivalent representations for all POWL models. This work bridges the gap between the theoretical advantages of POWL and the practical need for compatibility with established notations, paving the way for broader adoption of POWL in process analysis and improvement applications.

\keywords{Workflow Net \and Process Modeling \and Model Transformation}
\end{abstract}

\section{Introduction}\label{sec:intro}
The field of process modeling and analysis relies heavily on formal notations to represent and reason about the behavior of complex systems. While standard notations like Petri nets \cite{DBLP:journals/topnoc/HeeSW13a}, and specifically Workflow Nets (WF-nets) \cite{DBLP:journals/eor/SalimifardW01}, and Business Process Model and Notation (BPMN) \cite{DBLP:books/el/15/RosingWCM15} have gained widespread adoption, they suffer from limitations in terms of their ability to guarantee desirable quality properties such as \textit{soundness} (i.e., the absence of deadlocks and other anomalies).

The recently introduced Partially Ordered Workflow Language (POWL) \cite{powl} addresses these limitations by providing a powerful yet formally sound framework for process modeling. POWL is a hierarchical modeling language that allows for the construction of complex process models by combining smaller submodels using a set of well-defined operators. These operators include exclusive choice (XOR), loop, and partial order. POWL can be viewed as a generalization of process trees \cite{sander}, a notation widely used in various process mining techniques due to its quality guarantees. POWL preserves these desirable properties while increasing expressiveness through the partial order operator. Partial orders allow for the representation of activities that can be executed concurrently, but may have some ordering restrictions. \autoref{fig:ex:powl} illustrates an example POWL model, and \autoref{fig:ex:wf} shows a WF-net that captures the same behavior.

\begin{figure}[!t]
    \centering    
        \begin{subfigure}{0.35\textwidth}
            \centering
            \includegraphics[width=\textwidth]{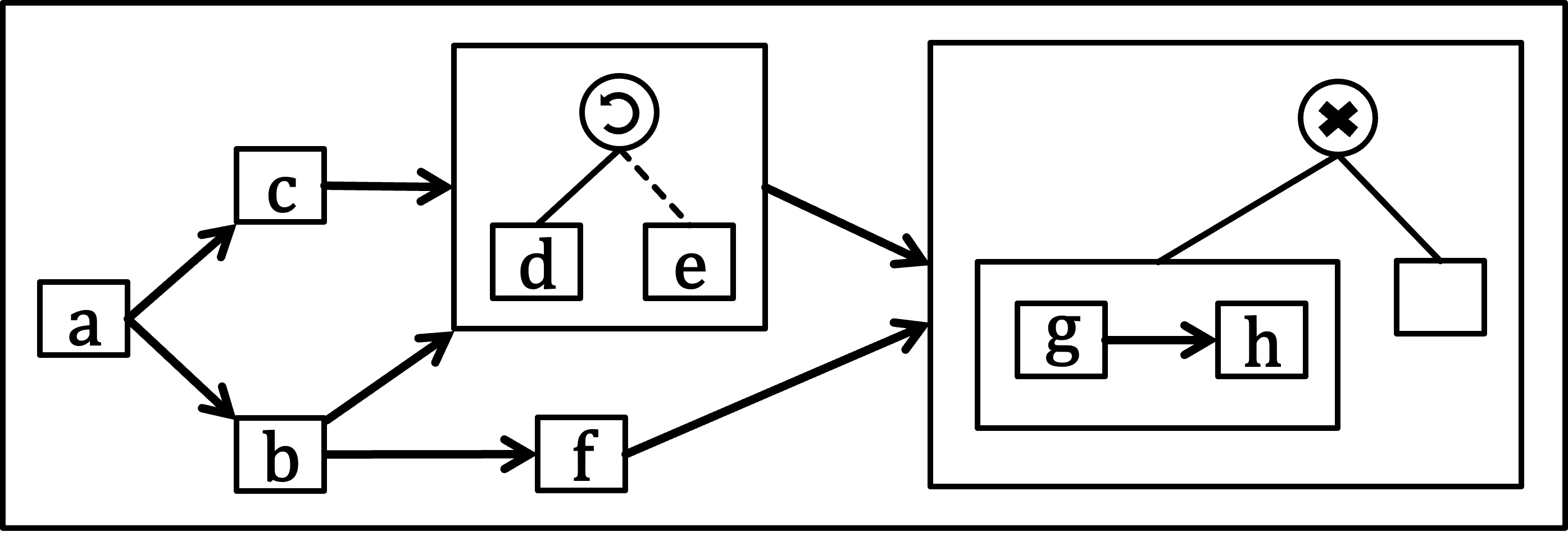}
            \caption{A POWL model.}\label{fig:ex:powl}
        \end{subfigure}
        \begin{subfigure}{0.09\textwidth}
            $ $
        \end{subfigure}
        \begin{subfigure}{0.54\textwidth}
            \centering
            \includegraphics[width=\textwidth]{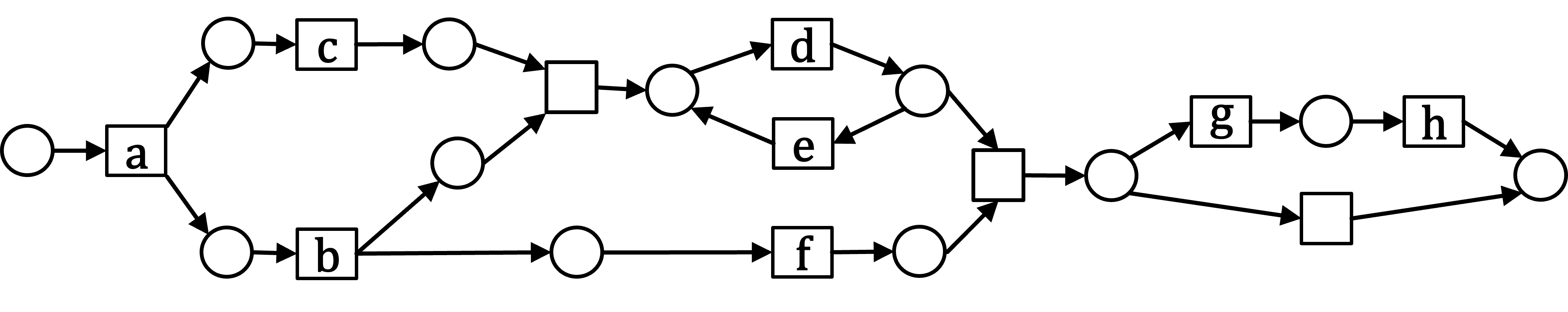}
            \caption{A WF-net.}\label{fig:ex:wf}
        \end{subfigure}

        \caption{Example process models.\label{fig:ex}}
\end{figure}

The hierarchical nature of POWL offers several advantages. The structured representation can significantly enhance the understandability of complex process models for humans, making it easier to grasp the overall control flow and identify potential areas for improvement. Furthermore, POWL opens up opportunities for developing faster and more efficient techniques for different process mining tasks, similar to how specialized algorithms have been optimized for process trees. Previous work has demonstrated the advantages of POWL in process discovery from data \cite{DBLP:conf/icpm/KouraniSA23,DBLP:journals/is/KouraniZSA25} and process modeling from text \cite{DBLP:conf/bpmds/KouraniB0A24}.

Despite its demonstrated benefits, the practical adoption of POWL is challenged by the prevalence of standard notations in existing tools and workflows. This motivates the need for a robust conversion from standard notations to POWL, allowing us to harness these benefits without requiring the adjustment of existing tools and practices. While the conversion from POWL to a sound WF-net is relatively straightforward, the inverse transformation presents a more significant challenge. This asymmetry arises from the fact that POWL represents a subclass of sound WF-nets. 

This paper proposes an algorithm that translates sound WF-nets into POWL. The proposed algorithm recursively decomposes the input WF-net into its smaller parts, identifies structural patterns corresponding to POWL's constructs, and assembles them into an equivalent POWL model. We formally prove the correctness of this transformation, demonstrating that the generated POWL model captures the language of the original WF-net. Furthermore, we define a subclass of WF-nets that encompasses equivalent representations for all POWL models, and we show the completeness of our approach on this subclass. We implement our algorithm and apply it on large WF-nets to demonstrates its high scalability. This work represents a significant step towards enabling the development of a process analysis and improvement techniques that utilize POWL internally while seamlessly accepting inputs in widely used formats.

The remainder of the paper is structured as follows. After discussing related work in \autoref{sec:rel}, we introduce necessary preliminaries in \autoref{sec:pre}. In \autoref{sec:convert}, we detail our algorithm for translating WF-nets into POWL. \autoref{sec:gua} formally proves the algorithm's correctness and completeness guarantees. Finally, we assess the scalabilty of our approach in \autoref{sec:eval}, and we conclude the paper in \autoref{sec:conc}.

\section{Related Work}\label{sec:rel}
Transformations between process modeling notations have been explored in various contexts. Some research focuses on transformations between different Petri net classes, such as the work on unfolding Colored Petri Nets into standard Place/Transition nets in \cite{DBLP:conf/apn/Dal-Zilio20} and the work on reducing free-choice Petri nets to either T-nets (also called marked graphs) or P-nets (also called state machines) in \cite{DBLP:conf/apn/Aalst21}. Other research addresses transformations between different types of process models. For example, \cite{DBLP:journals/infsof/DijkmanDO08} proposes an approach for converting BPMN models into Petri nets, \cite{gardner2003uml} discusses translating UML diagrams into BPEL, and \cite{DBLP:conf/apn/LangnerSW98} explores the mapping of Event-driven Process Chains (EPCs) into colored Petri nets. An overview on different approaches for the translation between workflow graphs and free-choice workflow nets is provided in \cite{DBLP:journals/is/FavreFV15}.


Transformations from graph-based formalisms like Petri nets into block-structured languages such as BPEL or process trees have been widely studied. The work on translating WF-nets to BPEL in \cite{DBLP:journals/infsof/AalstL08,DBLP:conf/otm/LassenA06} employs a bottom-up strategy, iteratively identifying patterns corresponding to BPEL fragments and substituting each identified pattern with a single transition to continue the recursion. This approach aims to maximize the size of the detected components in each iteration. The approach presented in \cite{DBLP:journals/algorithms/ZelstL20} for translating WF-nets into process trees, while also employing a bottom-up strategy, restricts the search space to patterns of size two. This approach cannot be adapted to POWL due to the presence of advanced partial order constructs that cannot be decomposed into components of size two. The fundamental difference between our algorithm and the aforementioned approaches, besides using different modeling languages, is that our approach employs a top-down strategy to ensure high scalability.

\section{Preliminaries}\label{sec:pre}
This section introduces fundamental preliminaries and notations.

\subsection{Basic Notations}\label{sec:pre:notation}

A \emph{multi-set} generalizes the notion of a set by tracking the frequencies of its elements. A multi-set over a set $X$ is expressed as $M = [{x_{1}}^{c_1},..., {x_{n}}^{c_n}]$ where $x_{1},..., x_{n} \in X$ are the elements of $M$ (denoted as $x_{i} \in M$ for $1\leq i\leq n$) and $M(x_i) = c_i \geq 1$ is the frequency of $x_{i}$ for $1\leq i\leq n$. 

A sequence of length $n \geq 0$ over a set $X$ is defined as function $\sigma{\colon}\{1,...,n\} \to X$, and we express it as $\sigma = \langle \sigma(1), ..., \sigma(n)\rangle$. The set of all sequences over $X$ is denoted by $X^{*}$. The concatenation of two sequences $\sigma_1$ and $\sigma_2$ is expressed as $\sigma_1 \cdot \sigma_2$, e.g., $\langle x_{1} \rangle \cdot \langle x_{2}, x_{1} \rangle = \langle x_{1}, x_{2}, x_{1} \rangle$. For two sets of sequences $L_1$ and $L_2$, we write $L_1 \cdot L_2 = \{\sigma_1 \cdot \sigma_2 \ | \ \sigma_1\in L_1 \ \wedge \ \sigma_2\in L_2\}$. 

Let ${\po \subseteq X\times X}$ be a binary relation over a set $X$. We use ${x_1 \po x_2}$ to denote ${(x_1,x_2)\in \po}$ and ${x_1 \notpo x_2}$ to denote ${(x_1,x_2) \notin \po}$. We define the \emph{transitive closure} of $\po$ as $\closure\po = \{(x, y) \mid \exists_{x_1, \dots, x_n \in X} \ x = x_1 \ \wedge \ y = x_n \ \wedge \forall_{1\leq i<n}, x_i \po x_{i+1}\}$.

A \emph{strict partial order} (\emph{partial order} for short) over a set $X$ is a binary relation that is \emph{irreflexive} ($x \notpo x$ for all ${x\in \X}$) and \emph{transitive} (${x_1 \po  x_2} \wedge {x_2 \po  x_3} \Rightarrow {x_1 \po  x_3}$). Irreflexivity and transitivity imply \emph{asymmetry} (${x_1 \po  x_2} \Rightarrow {x_2 \notpo  x_1}$). For $n \geq 2$, we use $\Orders{n}$ to denote the set of all partial orders over $\{1, \dots, n\}$. Let $X = \{x_1, ..., x_n\}$ be a set of size $n \geq 2$ and $\po \in \Orders{n}$. Then we write $\po(x_1, ..., x_n)$ to denote the partial order $\po'$ defined over $X$ as follows: 
$i \po j \Leftrightarrow x_i \po' x_j$ for all $i, j \in \{1, \dots, n\}$.


Let $\sigma_1, ..., \sigma_n \in X^*$ be $n \geq 2$ sequences over a set $X$ and $\po \in \Orders{n}$. The \emph{order-preserving shuffle operator} $\shuffle_{\po}$ generates the set of sequences resulting from interleaving $\sigma_1, ..., \sigma_n$ while preserving the partial order $\po$ of the sequences and the sequential order within each sequence. For example, let $\sigma_1 = \langle a, b\rangle$, $\sigma_2 = \langle c\rangle$, $\sigma_3 = \langle d, e\rangle$, and $\po = \{(1, 2), (1, 3)\} \in \Orders{3}$. Then, $\shuffle_{\po}(\sigma_1, \sigma_2, \sigma_3) = \{\langle a, b, c, d, e\rangle, \langle a, b, d, c, e\rangle, \langle a, b, d, e, c\rangle\}$.

For a set $X$, a \emph{partition} of $X$ of size $n \geq 1$ is a set of subsets $P = \{X_1, ... , X_n\}$ such that $X = X_1 \cup ... \cup X_n$, $X_i \neq \emptyset$, and $X_i \cap X_j = \emptyset$ for $1 \leq i < j \leq n$. For any $x  \in X$, we write $P_x$ to denote the subset of the partition (also called \emph{part}) that contains $x$, i.e., $P_x \in \{X_1, ... , X_n\}$ and $x \in P_x$. For example, let $P = \{\{a, b\}, \{c\}\}$ be a partition of $\{a, b, c\}$ of size $2$. Then, $P_a = P_b = \{a, b\}$ and $P_c = \{c\}$. 

\subsection{Workflow Nets}

We use $\ActUniverse$ to denote the set of all activities. We use $\tau \notin \ActUniverse$ to denote the \emph{silent activity}, which is used to model a choice between executing or skipping a path in process model, for example. To enable creating process models with duplicated activities, we introduce the notion of \emph{transitions}, and we use $\TraUniverse$ to denote the set of all transitions. Each transition is mapped to an activity, denoted as the $label$ of the transition. We use $\lab{\colon} \TraUniverse \to \ActUniverse \cup \{\tau\}$ to denote the labeling function.

A \textit{Petri net} is a directed bipartite graph consisting of two types of nodes: \textit{places} and \textit{transitions}. Transitions represent instances of activities, while places are used to model dependencies between transitions.

\begin{definition}[Petri Net]\label{def:petrinet}
A Petri net is a triple $N = (P, T, F)$, where $T \subset \TraUniverse$ is a finite set of transitions, $P$ is a finite set of places such that $T \cap P = \emptyset$, and $F \subseteq (P \times T) \cup (T \times P)$ is the flow relation.  
\end{definition}

Let $N=(P,T, F)$ be a Petri net. We define the following notations: 
\begin{itemize}
    \item For $x \in P \cup T$, $\pre{x} = \{y \ | \ (y, x) \in F\}$ is the \emph{pre-set} of $x$, and $\post{x} = \{y \ | \ (x, y) \in F\}$ is the \emph{post-set} of $x$.
    \item For $T' \subseteq T$, we define the \emph{projection} of $P$ on $T'$ as $P\project{T'} = \{ p \in P \ | \ (\pre{p} \cup \post{p}) \cap T' \neq \emptyset\}$.
    \item For $P' \subseteq P$ and $T' \subseteq T$, we define the \emph{projection} of $F$ on $P$ and $T$ as $F\project{P',T'} = F \cap ((P' \times T') \cup (T' \times P'))$.
    \item For $T' \subseteq T$, two places $p,p' \in P$ are \emph{equivalent with respect to $T'$}, denoted as $p\approx_{T'} p'$, iff $(\pre{p} \cap T' = \pre{p'} \cap T') \ \wedge \ (\post{p} \cap T' = \post{p'} \cap T')$.
\end{itemize}

Places hold \emph{tokens}, and a transition is considered \emph{enabled} if each of its preceding places has at least one token. \emph{Firing} an enabled transition consumes one token from each of its preceding places and produces a token in each of its succeeding places. A \emph{marking} is a multi-set of places indicating the number of tokens in each place. A Petri net is called \emph{safe} if each place in the net cannot hold more than one token. Next, we define three subclasses of Petri nets. More details on these classes can be found in \cite{DBLP:conf/ac/1996petri1,desel1995free,DBLP:journals/eor/SalimifardW01}.


\begin{definition}[Marked Graph, Free-Choiceness, Workflow Net]\label{def:petri-sub}
Let $N=(P,T, F)$ be a Petri net. Then, the following holds:
\begin{itemize}
    \item $N$ is a marked graph iff for any $p \in P$: $|\pre{p}| \leq 1 \ \wedge \ |\post{p}| \leq 1$.  
    \item $N$ is free-choice iff for any $t_1, t_2 \in T$: $(\pre{t_1} \cap \pre{t_2} \neq \emptyset) \Rightarrow (\pre{t_1} = \pre{t_2})$. 
    \item $N$ is a workflow net (WF-net) iff places $N_{source}, N_{sink}\in P$ exist such that:
    \begin{itemize}
        \item \textbf{Unique source:} $\{N_{source}\} = \{p \in P \ | \ \pre{p} = \emptyset\}$.
        \item \textbf{Unique sink:} $\{N_{sink}\} = \{p \in P \ | \ \post{p} = \emptyset\}$.
        \item \textbf{Connectivity:} each node is on a path from $N_{source}$ to $N_{sink}$.
    \end{itemize}
\end{itemize}
\end{definition}





Let $N=(P,T, F)$ be a WF-net. We use $N_{source}$ to denote the unique source place and $N_{sink}$ to denote the unique sink place. Furthermore, we define: 
\begin{itemize}
    \item For $T' \subseteq T$, $\Pre{T'} = \{p \in P \ | \ T' \cap \post{p} \neq \emptyset \ \wedge \ (p = N_{source} \ \vee \ (T \setminus T') \cap \pre{p} \neq \emptyset)\}$ is the set of \emph{entry points} of $T'$.
    \item For $T' \subseteq T$, $\Post{T'} = \{p \in P \ | \ T' \cap \pre{p} \neq \emptyset \ \wedge \ (p = N_{sink} \ \vee \ (T \setminus T') \cap \post{p} \neq \emptyset)\}$ is the set of \emph{exit points} of $T'$.
    \item For a partition $G = \{T_1, \dots, T_n\}$ of $T$, the \emph{execution order} of $G$ within $N$ is the binary relation $\mathit{order}(N, G) = \{(i, j) \ | \ i, j \in \{1, \dots, n\} \ \wedge \ (\Post{T_i} \cap \Pre{T_j}) \neq \emptyset\}$.
\end{itemize}

WF-nets may suffer from quality anomalies (e.g., transitions that can never be enabled). WF-nets without such undesirable properties are called \emph{sound}.

\begin{definition}[Soundness]\label{def:sound}
Let $N=(P,T, F)$ be a WF-net. $N$ is sound iff the following conditions hold:
\begin{itemize}
    \item \textbf{No dead transitions:} for each transition $t\in T$, there exists a marking $M$ reachable from $[N_{source}]$ that enables $t$.
    \item \textbf{Option to complete:} for every marking $M$ reachable from $[N_{source}]$, there exists a firing sequence leading from $M$ to $[N_{sink}]$. 
    \item \textbf{Proper completion:} $[N_{sink}]$ is the only marking reachable from $[N_{source}]$ with at least one token in $N_{sink}$. 
\end{itemize}
\end{definition}

The WF-net shown in \autoref{fig:ex:wf} is sound. Note that the option to complete implies proper completion.

\subsection{POWL Language}\label{sec:powl}
A POWL model is constructed recursively from a set of activities, combined either as partial orders or using the control flow operators $\xor$ and $\Loop$. The operator $\xor$ models an exclusive choice between submodels, while $\Loop$ model cyclic behavior between two submodels: the \textit{do-part} is executed first, and each time the \textit{redo-part} is executed, it is followed by another execution of the do-part. In a partial order, all submodels are executed, while respecting the given execution order.

POWL models are defined in \cite{powl} over activities. We redefine POWL models over transitions to allow for models with multiple instances of the same activity. 

\begin{definition}[POWL Model]\label{def:powl} 
POWL models are defined as follows:
\begin{itemize}
    \item Any transition ${t \in \TraUniverse}$ is a POWL model.
    \item Let $\powl_1, ..., \powl_n$ be $n \geq 2$ POWL models.
    \begin{itemize}
        \item $\xor(\powl_1, ..., \powl_n)$ is a POWL model.
        \item $\Loop(\powl_1, \powl_2)$ is a POWL model.
        \item For any partial order $\po \in \Orders{n}$, $\po(\powl_1, ..., \powl_n)$ is a POWL model.
    \end{itemize}
\end{itemize}
\end{definition}

\autoref{fig:ex:powl} shows an example POWL model. The language a POWL model is defined recursively based on the semantics of its operators.

\begin{definition}[POWL Semantics]\label{def:lang}
The language of a POWL model $\powl$ is recursively defined as follows:
\begin{itemize}
    \item ${\lang(t) = \{\langle a \rangle\}}$ for $t \in \TraUniverse$ with $\lab(t) = a \in \ActUniverse$.
    \item ${\lang(t) = \{\langle \rangle\}}$ for $t \in \TraUniverse$ with $\lab(t) = \tau$.
    \item Let $\powl_1, ..., \powl_n$ be ${n \geq 2}$ POWL models with $L_i = \lang(\powl_i)$ for $1 \leq i \leq n$.
    \begin{itemize}
        \item $\lang(\xor(\powl_1, ..., \powl_n)) = \bigcup\limits_{1\leq i \leq n} {L_i}$.
        \item $\lang(\Loop(\powl_1, \powl_2)) = L_1 \cdot (L_2 \cdot L_1)^*$.
         \item For $\po \in \Orders{n}$, $\lang(\po(\powl_1, ..., \powl_n)) = \{\sigma \in \shuffle_{\po}(\sigma_1 , ..., \sigma_n) \ | \ \forall_{1 \leq i \leq n} \ \sigma_i \in L_i \}$.           
    \end{itemize}
\end{itemize}
\end{definition}

\subsection{Semi-Block-Structured WF-nets}
POWL is less expressive than WF-nets, meaning that not all WF-nets have equivalent POWL models. To clearly define the scope of our WF-net to POWL translation algorithm, we extend the concept of \emph{block-structured} workflow nets defined in \cite{sander}. A block-structured WF-net is a sound WF-net that can be divided recursively into parts having single entry and exit points. In other words, there must be a unique mapping between every place/transition with multiple outgoing arcs and every place/transition with multiple incoming arcs to mark the start and end of a block.

POWL can represent all block-structured WF-nets, and the introduction of the partial order operator in POWL extends its expressiveness beyond this class. We can relax the block-structure requirements since the partial order operator allows for the concurrent executions of transitions without imposing block structure on them. In other words, the block structure is only required for decision points (i.e., places) due to the usage of the process tree operators $\xor$ and $\Loop$.


\begin{definition}[Semi-Block-Structured WF-nets]\label{def:block}
Let $N = (P, T, F)$ be a WF-net. $N$ is \emph{semi-block-structured} iff it satisfies the following conditions:

\begin{itemize}
    \item $N$ is sound and safe.

    \item \textbf{Explicit decision points:} For each $(x, y) \in F: \card{\pre{y}} = 1 \vee \card{\post{x}} = 1$.
    
    \item \textbf{Unique mapping between split and join decision points:} Let
    $P_{\text{split}} = \{ p \in P \mid |\post{p}| > 1 \}$ and
    $P_{\text{join}} = \{ p \in P \mid |\pre{p}| > 1 \}$. There exists a bijective mapping $
    \mathcal{B}: P_{\text{split}} \rightarrow P_{\text{join}}$ such that for each $(p, p') \in \mathcal{B}$: $|\post{p}| = |\pre{p'}|$.
    
    \item \textbf{Disjoint subnets between decision points:} For each pair $(p, p') \in \mathcal{B}$, let $k = |\post{p}| = |\pre{p'}|$. There exist $k$ disjoint, non-empty sets of transitions $T_1, \dotsc, T_k \subseteq T$ such that for each i ($1\leq i \leq k$), $P_i = P|_{T_i}$, and $F_i = F|_{P_i, T_i}$:
    \begin{itemize}
        \item $P_i \cap P|_{T\setminus T_i} = \{ p, p' \}$.
        \item $N_i = (P_i, T_i, F_i)$ is a workflow net with $\{ N_{i_{source}}, N_{i_{sink}} \} = \{ p, p' \}$.
    \end{itemize}
    
    
\end{itemize}

\end{definition}

\begin{figure}[!t]
    \centering    
        \includegraphics[width=0.7\textwidth]{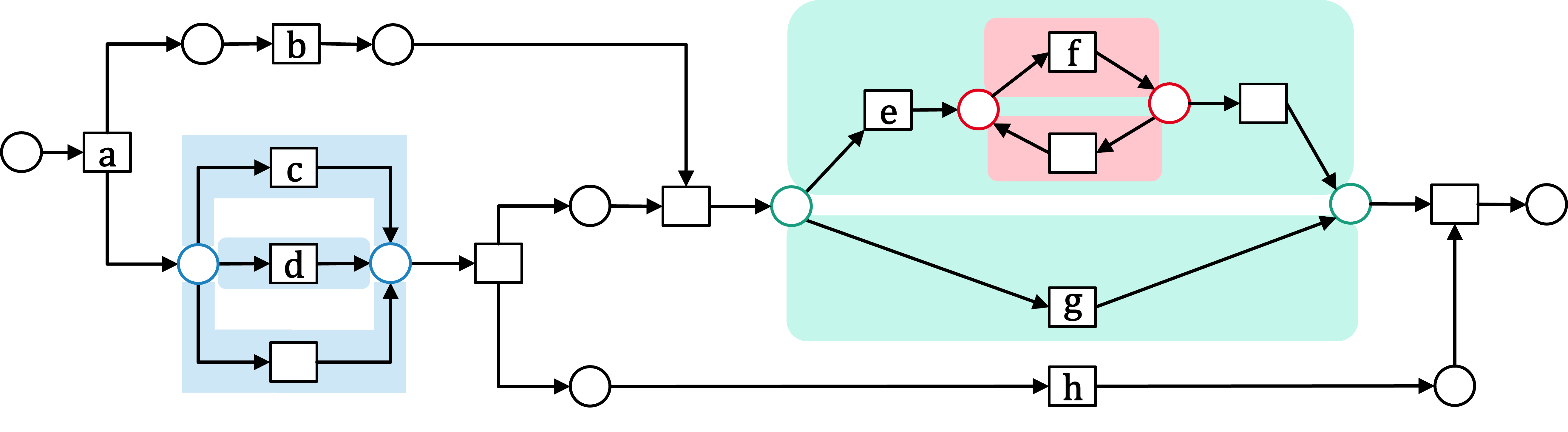}
        \caption{A semi-block-structured WF-net with three blocks, highlighted in different colors.\label{fig:ex:blockWF}}
\end{figure}

The WF-net in \autoref{fig:ex:blockWF} is semi-block-structured WF-net but not block structured. Note that a semi-block-structured WF-net with no blocks is a marked graph. Furthermore, a semi-block-structured WF-net is free-choice due to the explicit decision points requirement. Therefore, we can conclude that semi-block-structured WF-nets are a subset of sound free-choice WF-nets, a superset of block-structured WF-nets, and superset of sound marked graph WF-nets. 

It is trivial to observe that, for any POWL model, there exists at least one equivalent semi-block-structured WF-net. Furthermore, after introducing our conversion algorithm, we will show in \autoref{sec:gua:redisc} its completeness on this class of WF-nets; i.e., we will show the our algorithm successfully converts any semi-block-structured WF-net into a POWL model that captures the same language.


\section{Transforming Workflow Nets into POWL}\label{sec:convert}
This section presents a recursive algorithm for transforming safe and sound WF-nets into equivalent POWL models. First, the WF-net can be preprocessed by applying a set of reduction rules. Then, the algorithm checks whether the WF-net forms a structural pattern that corresponds to a POWL component (i.e., choice, loop, or partial order). The identified pattern is then translated into its corresponding POWL representation, and the WF-net is projected onto subsets of transitions to continue the recursion.

\subsection{Preprocessing}


To expand the applicability of our algorithm, the input WF-net can be preprocessed by applying a series of reduction rules. This is an optional step, aiming at bringing the WF-net closer to the structure expected in the subsequent steps of the algorithm. 
Numerous reduction rules have been proposed in the literature, such as those presented in \cite{desel1995free,DBLP:conf/ac/Berthelot86,10.1007/BFb0016204,DBLP:journals/pieee/Murata89}. Any reduction rule can be applied as long as it preserves the essential structural properties of the WF-net, namely its language, safeness, and soundness. In \autoref{fig:prepro}, we illustrate reduction rules for introducing explicit places for decision points. Additional reduction rules are introduced in \autoref{sec:loop} to enable the detection of special loop structures.


\begin{figure}[!t]
    \centering    
        
        \begin{subfigure}{0.48\textwidth}
            \centering
            \includegraphics[width=0.8\textwidth]{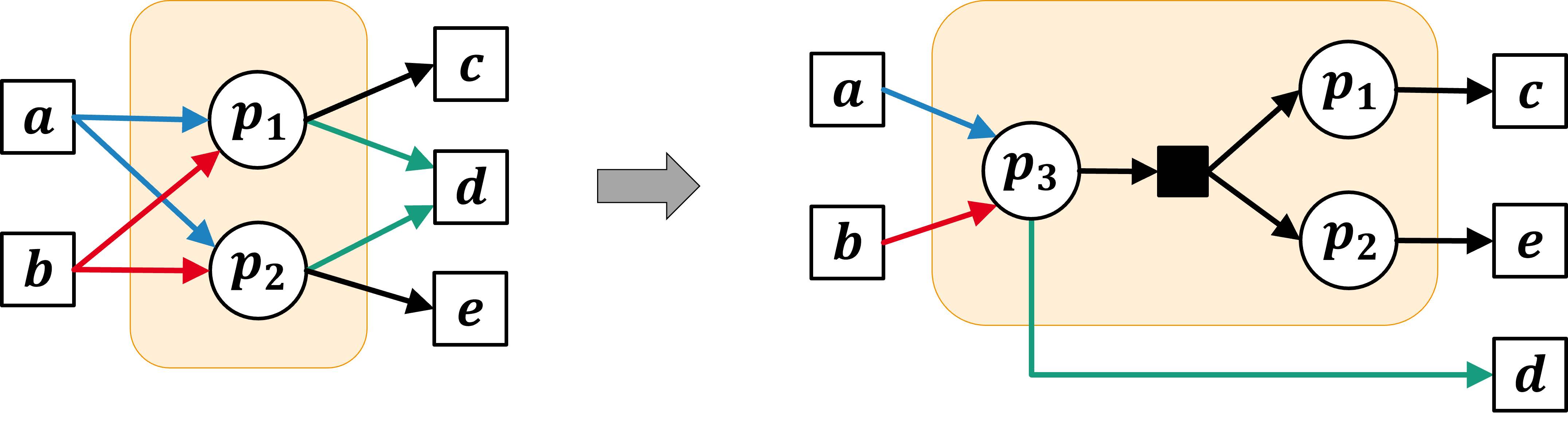}
            \caption{Introducing explicit XOR split places.}
        \end{subfigure}
        \begin{subfigure}{0.02\textwidth}
                $ $
        \end{subfigure}
        \begin{subfigure}{0.48\textwidth}
            \centering
            \includegraphics[width=0.8\textwidth]{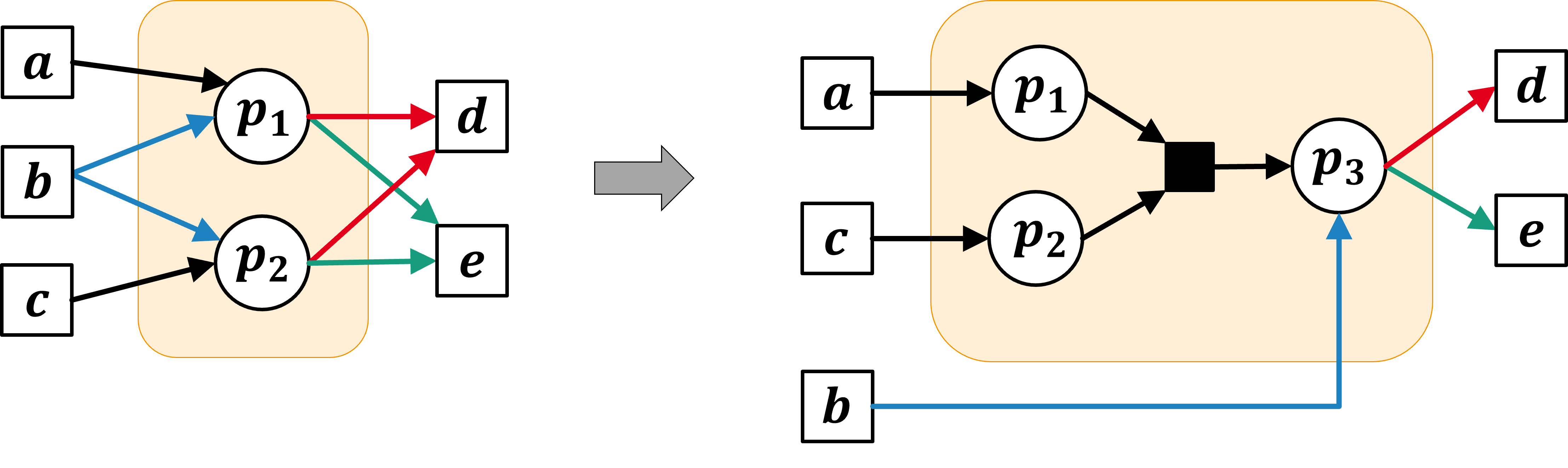}
            \caption{Introducing explicit XOR join places.}
        \end{subfigure}

        \caption{Reduction rules illustrated with examples.\label{fig:prepro}}
\end{figure}

\subsection{Identifying Choices}
This section addresses partitioning a WF-net into smaller subnets such that the given WF-net corresponds to an XOR structure over the identified parts. 

We first introduce the concept of \textit{transition reachability}, capturing the notion of one transition being reachable from another through a path in the WF-net.

\begin{definition}[Transition Reachability]
Let $N = (P, T, F)$ be a Petri net. The \emph{transition reachability relation} $\TTR \subseteq T \times T$ is defined as follows for $t, t' \in T$:
\[
t \TTR t' \iff \exists_{p_1, \dots, p_n \in P} \text{ and }  \exists_{t_1, \dots, t_{n+1} \in T} 
\]
such that $t_1 = t$, $t_{n+1} = t'$, and for each $i$ ($1 \leq i \leq n$):
\[
(t_i, p_{i}) \in F \quad \text{and} \quad (p_{i}, t_{i+1}) \in F.
\]

\end{definition}

An \emph{XOR pattern}, as illustrated in \autoref{fig:xor}, represents a WF-net where transitions can be partitioned such that exactly one part can be executed in a single process instance. Intuitively, transitions belonging to different parts cannot be reachable from each other.

\begin{figure}[!t]
    \centering    
     \includegraphics[width=0.5\textwidth]{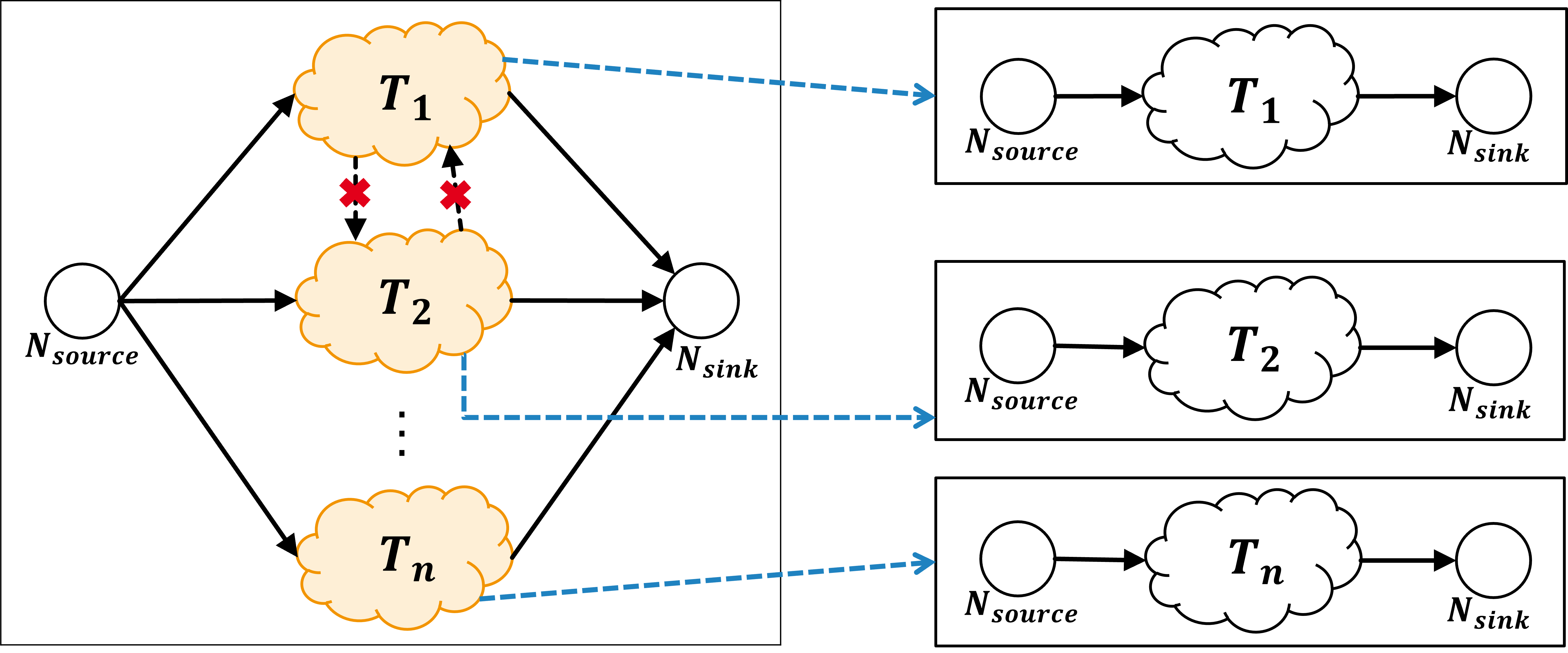}
    \caption{XOR pattern and projection.\label{fig:xor}}
\end{figure}

\begin{definition}[XOR Pattern]\label{def:pattern_xor}
Let $N = (P, T, F)$ be a safe and sound WF-net. Let $G = \{T_1, \ldots, T_n\}$ be a partition of transitions of size $n \geq 2$. The tuple $(N, G)$ is called an XOR pattern iff for all $(t, t') \in T \times T$:
\[
\text{if } t \TTR t' \text{, then } G_t = {G_{t'}}.
\]
 \end{definition}

For a WF-net $N$, we use $\xorpart(N)$ to denote the partition generated by iteratively grouping transitions based on their reachability relation, aligning with \autoref{def:pattern_xor}. Note that $(N, \xorpart(N))$ is an XOR pattern if $\card{\xorpart(N)} \geq 2$. 

After identifying an XOR pattern, the WF-net is projected onto the different parts, creating several subnets for the recursive application of the algorithm. The projection is achieved by selecting the relevant places and flow relations, isolating the chosen part. 

\begin{definition}[XOR Projection]\label{def:projection_xor}
Let $(N, G)$ be an XOR pattern with $N = (P, T, F)$ and $T' \in G$ be a part. The XOR projection of $N$ on $T'$ is defined as $\xorproject(N, T') = (P', T', F')$ where $P' = P\project{T'}$ and $F' = F\project{P', T'}$.
\end{definition}



            
            

\subsection{Identifying Loops} \label{sec:loop}
This section addresses partitioning a WF-net into subnets such that the given WF-net corresponds to a loop structure over the identified do- and redo-parts. 

We first define the concept of \textit{in-between places reachability} to identify the transitions that lie on paths between two specific places in a WF-net. 


\begin{definition}[In-Between Places Reachability]
Let $N = (P, T, F)$ be a Petri net. The set of reachable transitions between two places $p, p' \in P$, denoted as $\PTR(p, p')$, is defined as follows:
\[
t \in \PTR(p, p') \iff \exists t_1, \dots, t_{n} \in T \text{ and } p_1, \dots, p_{n+1} \in P \]
such that
\[ 
p_1 = p \ \wedge \ p_{n+1} = p' \ \wedge \ t \in \{t_1, \dots, t_{n}\},
\]
and for each $i$ ($1 \leq i \leq n$):
\[
p_i \neq p' \ \wedge \ (p_i, t_i) \in F \ \wedge \ (t_i, p_{i+1}) \in F.
\]
\end{definition}

\begin{figure}[!t]
    \centering    
     \includegraphics[width=0.7\textwidth]{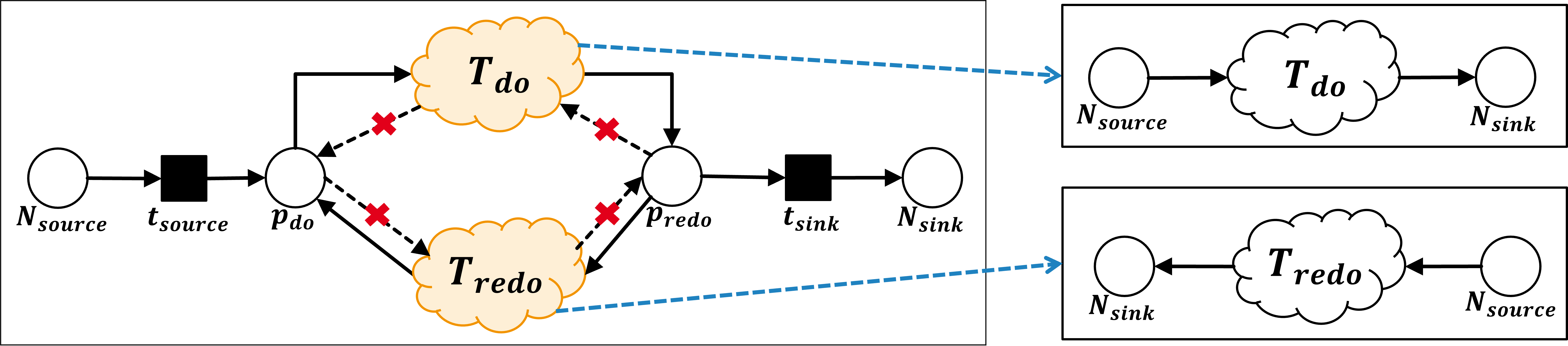}
    \caption{Loop pattern and projection.\label{fig:loop}}
\end{figure}

An \emph{loop pattern}, as illustrated in \autoref{fig:loop}, represents a WF-net where transitions can be partitioned into three parts: do-part, redo-part, and silent transitions for loop entry and exit. A loop pattern must also include two special places, $p_{do}$ and $p_{redo}$, which mark the entry and exit points of the loop. The do-part consists of transitions reachable from $p_{do}$ to $p_{redo}$, while the redo-part consists of transitions reachable from $p_{redo}$ to $p_{do}$.

\begin{definition}[Loop Pattern]\label{def:loop_pattern}
Let $N = (P, T, F)$ be a safe and sound WF-net. Let $G = \{T_{do}, T_{redo}, T_{\tau}\}$ be a partition of transitions of size $3$. The tuple  $(N, G)$ is called a loop pattern iff places $p_{do}, p_{redo} \in P$ exist such that: 
    \begin{enumerate}
        \item Two transitions $t_{source}$ and $t_{sink} $ exist such that $t_{source} \neq t_{sink}$, $T_{\tau} = \{t_{source}, t_{sink}\}$, and $\lang(t_{source}) = \lang(t_{sink}) = \tau$.
        \item $\post{N_{source}} = \{t_{source}\}$ and $\pre{N_{sink}} = \{t_{sink}\}$.
        \item $\pre{t_{source}} = \{N_{source}\}$ and $\post{t_{source}} = \{p_{do}\}$.
        \item $\pre{t_{sink}} = \{p_{redo}\}$ and $\post{t_{sink}} = \{N_{sink}\}$.                     
        
        \item $T_{do} = \PTR(p_{do}, p_{redo})$ and $T_{redo} = \PTR(p_{redo}, p_{do})$.  

        \item $\pre{p_{do}} \cap T_{do} = \emptyset$ and $\pre{p_{redo}} \cap T_{redo} = \emptyset$. 

        \item $\post{p_{redo}} \cap T_{do} = \emptyset$ and $\post{p_{do}} \cap T_{redo} = \emptyset$. 
        
    \end{enumerate}
\end{definition}

Note that a WF-net satisfying the requirements \textit{1 - 5} of \autoref{def:loop_pattern} can be preprocessed, as illustrated in \autoref{fig:loop_prepro:def_req}, to satisfy \textit{6} without affecting its language, soundness, or safeness. Furthermore, the requirement \textit{7} is implied by \textit{1 - 5}.

\begin{figure}[!t]
    \centering    
        \begin{subfigure}{0.5\textwidth}
            \centering
            \includegraphics[width=0.8\textwidth]{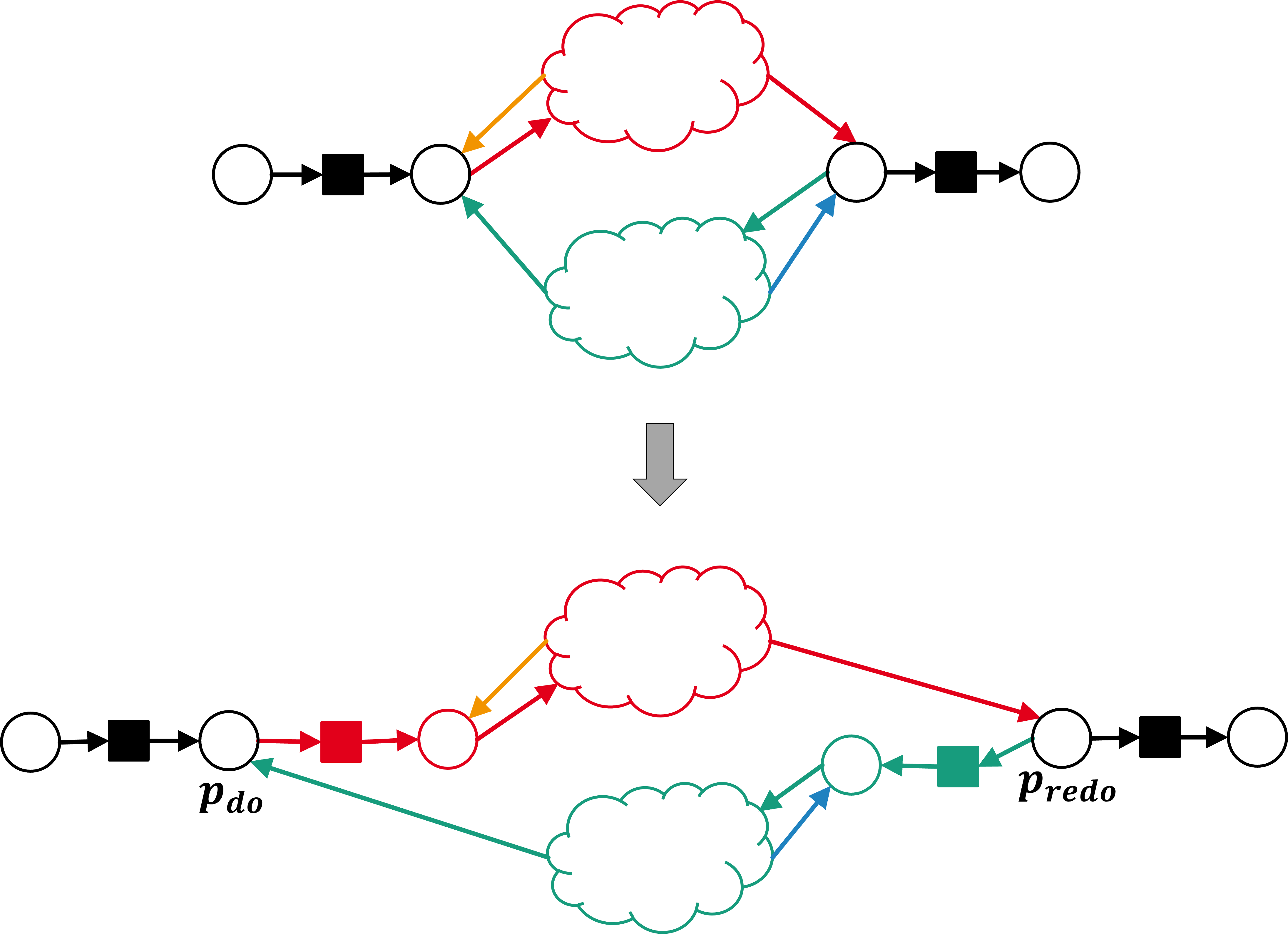}
            \caption{Satisfying the requirement \textit{6} of \autoref{def:loop_pattern}.\label{fig:loop_prepro:def_req}}
        \end{subfigure}
        \begin{subfigure}{0.13\textwidth}
        $ $
        \end{subfigure}
        \begin{subfigure}{0.35\textwidth}
            \centering
            \includegraphics[width=0.8\textwidth]{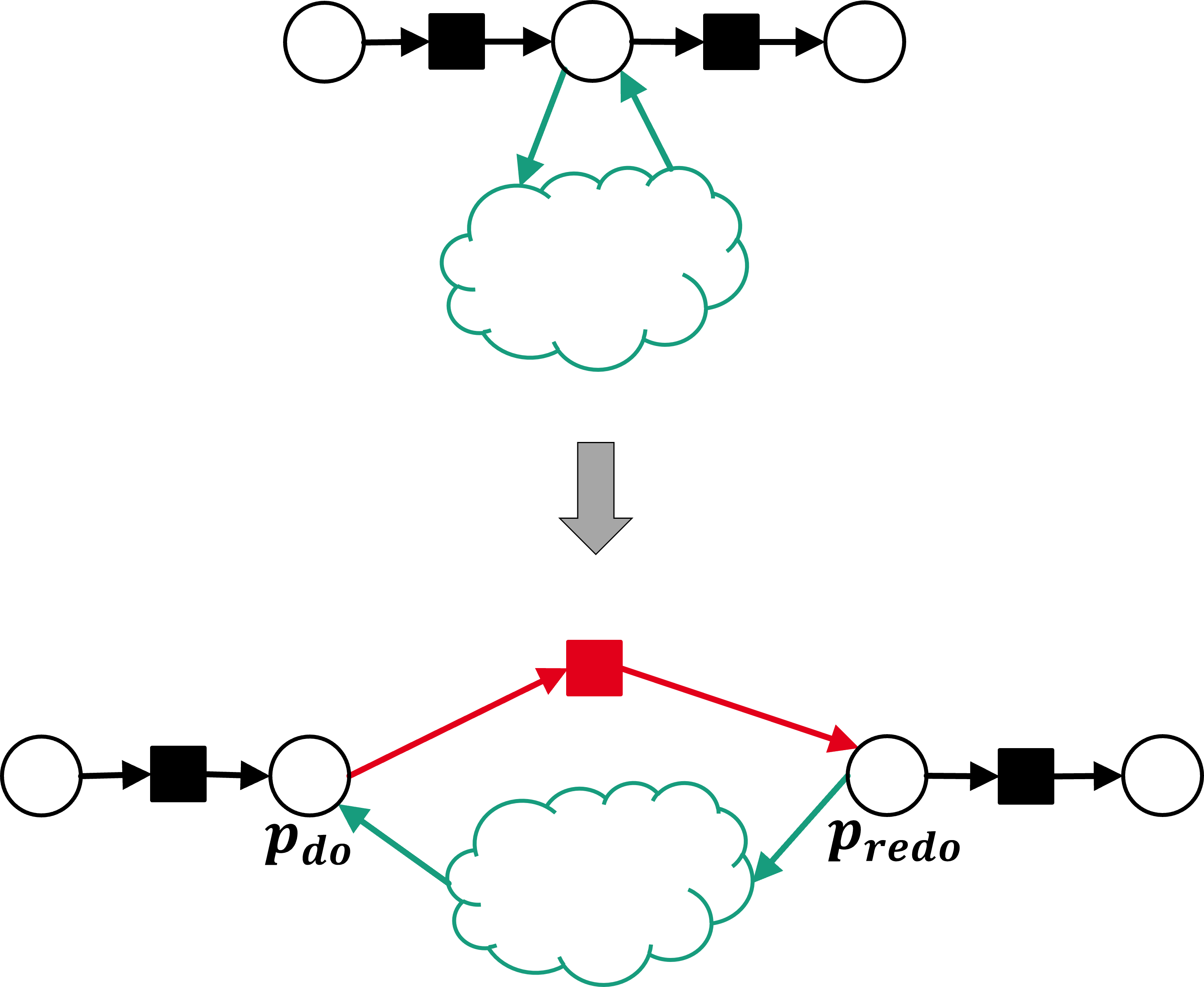}
            \caption{Handling self-loops. \label{fig:loop_prepro:self_loop}}
        \end{subfigure}

        \caption{Reduction rules for enabling loop pattern detection.\label{fig:loop_prepro}}
\end{figure}

\paragraph{Identifying Self-Loops with Missing Do-Part.}
A WF-net may represent a self-loop with a single place marking both the start and end of the loop. In such cases, all activities belong to the redo-part, which can be skipped or repeated. To enable the detection of these loops, we restructure the WF-net, as illustrated in \autoref{fig:loop_prepro:self_loop}, by adding a silent transition to represent the do-part.

After detecting a loop pattern, the WF-net is projected into the do-part and redo-parts, creating two subnets, and the recursion continues on the created subnets. As illustrated in \autoref{fig:loop}, the loop projection is done by selecting the appropriate subset of transitions and adjusting the flow relation to correctly connect the subnet to the source and sink place.

\begin{definition}[Loop Projection]\label{def:projection_loop}
Let $(N, G)$ be an loop pattern with $N = (P, T, F)$; $T_{do}, T_{redo} \in G$; and $p_{do}, p_{redo} \in P$ as defined in \autoref{def:loop_pattern}. The loop projection of $N$ on $T' \in \{ T_{do}, T_{redo}\}$ is $\loopproject(N, T') = (P', T', F')$ where
\[
P' = (P\project{T'} \setminus \{p_{do}, p_{redo}\}) \cup \{N_{source}, N_{sink}\}
\] and
\[
F' = F\project{P', T'} \ \cup \ \{(N_{source}, t) \ | \ t \in T' \ \wedge \ (p_{start}, t) \in F \} 
\] \vspace{-12pt} 
\[
\cup \ \{(t, N_{sink}) \ | \ t \in T' \ \wedge \ (t, p_{end}) \in F \}
\] with 
\[
(p_{start}, p_{end}) =
    \begin{cases}
        (p_{do}, p_{redo}) & \text{if } T' = T_{do}, \\
        (p_{redo}, p_{do}) & \text{if } T' = T_{redo}.
    \end{cases}
\]
\end{definition}

\subsection{Identifying Partial Orders}
This section addresses partitioning a WF-net into smaller subnets such that the given WF-net corresponds to a partial order over the identified parts.

A \emph{partial order pattern} represents a WF-net with a partition  of transitions such that all parts are executed and the execution order forms a partial order.

\begin{definition}[Partial Order Pattern]\label{def:po_pattern}
Let $N = (P, T, F)$ be a safe and sound WF-net. Let $G = \{T_1, \dots, T_n\}$ be a partition of transitions of size $n \geq 2$. The tuple $(N, G)$ is called a partial order pattern iff the following conditions hold: 
\begin{enumerate}
    \item For all $t, t' \in T$ and $p \in P$:
    \[
        \text{if } t, t' \in \{t \in T \ | \ \exists_{t_1, t_2 \in \post{p}} \ t_1 \TTR t \wedge t_2 \notTTR t \}\text{, then } G_{t} = G_{t'}.
    \]
    \item $\po = \closure{\mathit{order}(N, G)}$ is a partial order.

    \item \textbf{Unique local start:} for all $i \in \{1, \ldots, n\}$ and $p, p' \in \Pre{T_i}$: $p \approx_{T_i} p'$.

    \item \textbf{Unique local end:} for all $i \in \{1, \ldots, n\}$ and $p, p' \in \Post{T_i}$: $p \approx_{T_i} p'$.

\end{enumerate}
\end{definition} 

\begin{figure}[!t]
    \centering    
     \includegraphics[width=0.9\textwidth]{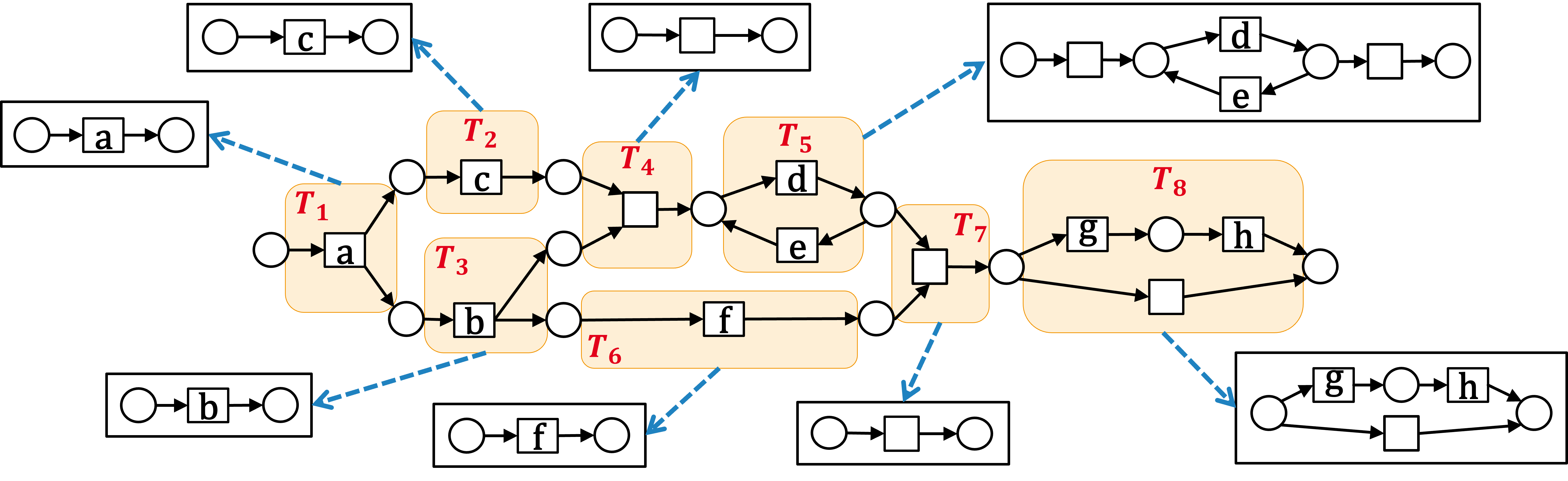}
    \caption{An example illustrating a partial order pattern $(N, \{T_1, \dots, T_8\})$ and the partial order projection of the WF-net $N$ on the identified parts.\label{fig:order}}
\end{figure}

The first condition of \autoref{def:po_pattern} states that if a place has outgoing flows leading to different transitions, then all such transitions must fall into the same part, as the place represents a decision point (i.e., there is a potential choice or loop within this part). The second condition requires that the transitive closure of the execution order between the parts of the partition forms a partial order. The third and fourth conditions ensure that the identified parts form cleanly separable components that can be executed independently with well-defined entry and exit points. \autoref{fig:order} shows a partial order pattern detected for the WF-net from \autoref{fig:ex:wf}.

For a WF-net $N$, we use $\popart(N)$ to denote the partition generated by iteratively grouping transitions based on their reachability relations with respect to decision points, aligning with the first condition of \autoref{def:po_pattern}. Note that $(N, \popart(N))$ is a partial order pattern if $\card{\popart(N)} \geq 2$ and the remaining three requirements of \autoref{def:po_pattern} are met.

\paragraph{Normalization.} Before defining the projection for partial order patterns, we introduce the concept of normalization. Let $N$ be a Petri net with known unique start and end places $p_s \in P$ and $p_e \in P$, respectively. Then $P$ is normalized into a new Petri net $\mathit{Normalize}(N, p_s, p_e)$ by (i) inserting a new start place and connecting it to $p_s$ through a silent transition in case $\pre{p_s} \neq \emptyset$ and (ii) inserting a new end place and connecting it to $p_e$ through a silent transition in case $\post{p_e} \neq \emptyset$. Normalization aims at ensuring conformance with the requirements of WF-nets (c.f. \autoref{def:petri-sub}) by adding new source and sink places if needed.

After detecting a partial order pattern, the WF-net is projected on the identified parts as illustrated in the example shown in \autoref{fig:order}. This projection is done by selecting the appropriate subset of transitions, adding unique start and end places to represent the entry and exit points of the part, adjusting the flow relation accordingly, and applying normalization if needed.

\begin{definition}[Partial Order Projection]\label{def:projection_po}
Let $(N, G)$ be a partial order pattern with $N = (P, T, F)$. Let $T' \in G$ be a part. The partial order projection of $N$ on $T'$ is $\poproject(N, T') = \mathit{Normalize}(N', p_s, p_e)$ where $p_s, p_e \notin P$ are two fresh places and $N' = (P', T', F')$ is constructed as follows:
\begin{itemize}
    \item $P' = (P\project{T'} \setminus (\Pre{T'} \cup \Post{T'})) \cup \{p_s, p_e\}$.
    \item $F' = F\project{P', T'}$\\ $\cup \ \{(p_s, t) \ | \ (p, t) \in F \text{ for } p\in\Pre{T'}\} \ \cup \ \{(t, p_s) \ | \ (t, p) \in F \text{ for } p\in\Pre{T'}\}$\\
    $ \cup \ \{(p_e, t) \ | \ (p, t) \in F \text{ for } p\in\Post{T'}\} \ \cup \ \{(t, p_e) \ | \ (t, p) \in F \text{ for } p\in\Post{T'}\}$.

\end{itemize}
\end{definition}


    
        


\subsection{WF-Net to POWL Converter}
\autoref{alg:convert_pn_to_powl} converts a safe and sound WF-nets into an equivalent POWL model. 
First, the algorithm checks whether the WF-net consists of a single transition (base case). If a base case is not detected, the algorithm attempts to identify an XOR, loop, or partial order pattern. If a pattern is found, the algorithm projects the WF-net on the identified parts, recursively converts the created subnets into POWL models, and combines them using the appropriate POWL operator. If no pattern is detected, the algorithm returns $null$, indicating that the WF-net cannot be converted into a POWL model.

\begin{algorithm}[!t]
\smaller
\caption{Conversion of a WF-Net into a POWL Model.}\label{alg:convert_pn_to_powl}
\DontPrintSemicolon
\KwIn{A safe and sound WF-net $N = (P, T, F)$.}
\KwOut{A POWL model or $null$ if no translation is possible.}

\SetKwFunction{FMain}{ConvertNetToPOWL}
\SetKwProg{Fn}{Function}{:}{}
\Fn{\FMain{$N$}}{

    
        \If{$\card{T} = 1$ with $T = \{t\}$, $\card{P} = 2$, and $F = \{(N_{source}, t), (t, N_{sink})\}$}{
            \Return $t$ \;
        }

        $G \leftarrow \xorpart(N)$\;
        
        \If{$(N, G)$ is an XOR pattern}{
            \For{$T_i \in G = \{T_1, \dots, T_n\}$}{
                $\powl_i \leftarrow$ \FMain{$\xorproject(N, T_i)$}\;
            }
            \Return $\xor(\powl_1, \dots, \powl_n)$\;
        }

     
        \If{a loop pattern exists with $p_{do}, p_{redo} \in  P$ as described in \autoref{def:loop_pattern}}{
             
            $\powl_{do} \leftarrow$ \FMain{$\loopproject(N, \PTR(p_{do}, p_{redo}))$}\;
            $\powl_{redo} \leftarrow$ \FMain{$\loopproject(N, \PTR(p_{redo}, p_{do}))$}\;      
            \Return $\Loop(\powl_{do}, \powl_{redo})$\;
        }

        $G \leftarrow \popart(N)$ 
    
       \If{$(N, G)$ is a partial order pattern}{
            $\po \leftarrow \closure{\mathit{order}(N, G)}$\;
            \For{$T_i \in G = \{T_1, \dots, T_n\}$}{
                $\powl_i \leftarrow$ \FMain{$\poproject(N, T_i)$}\;
            }
            \Return $\po(\powl_1, \dots, \powl_n)$\;
        }

    \Return $null$\;
}
\end{algorithm}

\begin{figure}[!t]
    \centering    
     \includegraphics[width=0.35\textwidth]{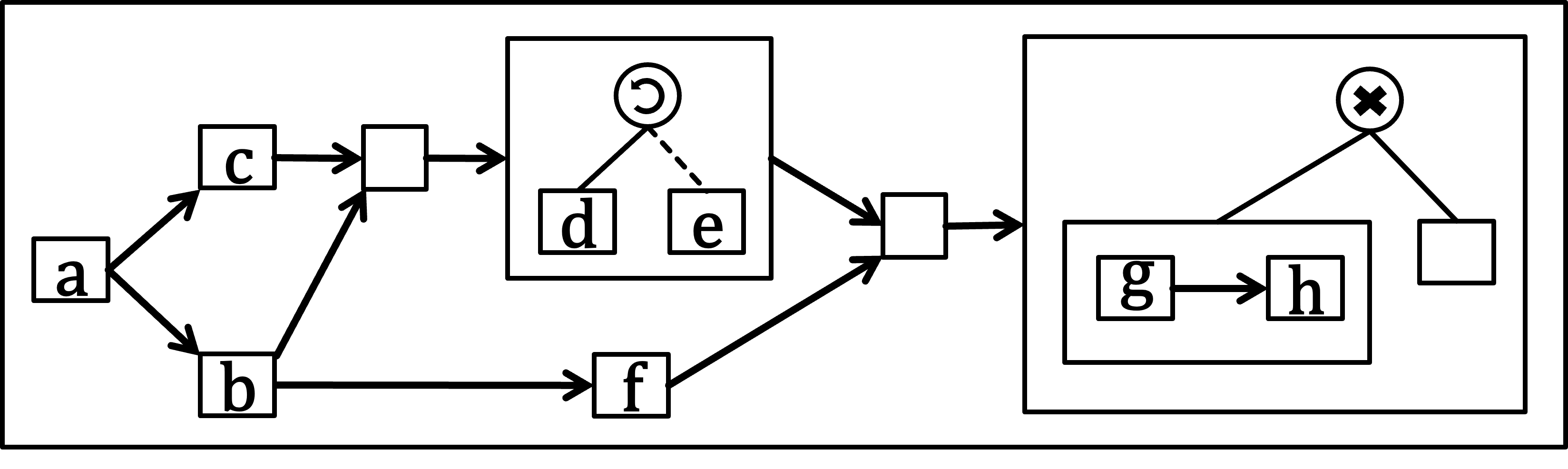}
    \caption{The POWL model generated by \autoref{alg:convert_pn_to_powl} for the WF-net from \autoref{fig:ex:wf}.\label{fig:ex:res}}
\end{figure}

\paragraph{Example Applications:} \autoref{fig:ex:res} shows the POWL model generated by applying \autoref{alg:convert_pn_to_powl} on the WF-net from \autoref{fig:ex:wf}. The generated POWL model, while structurally different from the manually crafted POWL model in \autoref{fig:ex:powl}, is semantically equivalent. \autoref{fig:neg_ex} illustrates three examples where \autoref{alg:convert_pn_to_powl} returns null:
\begin{itemize}
    \item The first WF-net (\autoref{fig:neg_ex:2}), while free-choice, its decision points are not arranged in a block structure, and no equivalent POWL model exists for its language. The algorithm attempts to find XOR or partial order patterns but ultimately produces partitions of size 1, leading to the return of null.
    \item The second WF-net (\autoref{fig:neg_ex:2}) exhibits a choice between activities $a$ and $b$, followed by a non-free-choice between $d$ and $e$ that is influenced by the preceding choice. This long-term dependency choice cannot be represented in POWL. \autoref{alg:convert_pn_to_powl} attempts to identify a partial order pattern, resulting in a partition that violates the requirements of \autoref{def:po_pattern}, e.g., no unique local end in the first part $T_1 = \{a, b\}$ since $\pre{p_3} \cap T_1 = \{a\} \neq \{b\} = \pre{p_4} \cap T_1$.
    \item In the third WF-net (\autoref{fig:neg_ex:3}), the places $p_2$ and $p_3$ model a choice between $d$ or executing both $b$ and $c$ concurrently. When attempting to identify a partial order pattern, the requirements are violated, e.g., no unique local start in the second part $T_2 = \{b,c,d\}$ since $\post{p_2} \cap T_2 = \{b, d\} \neq \{c, d\} = \post{p_3} \cap T_2$. However, applying the reduction rules from \autoref{fig:prepro} before applying \autoref{alg:convert_pn_to_powl} enables the successful conversion into POWL, as illustrated in \autoref{fig:ex_red}.
\end{itemize}  

\begin{figure}[!t]
\centering
    \begin{subfigure}{\textwidth}
    \centering
    \includegraphics[width=0.55\textwidth]{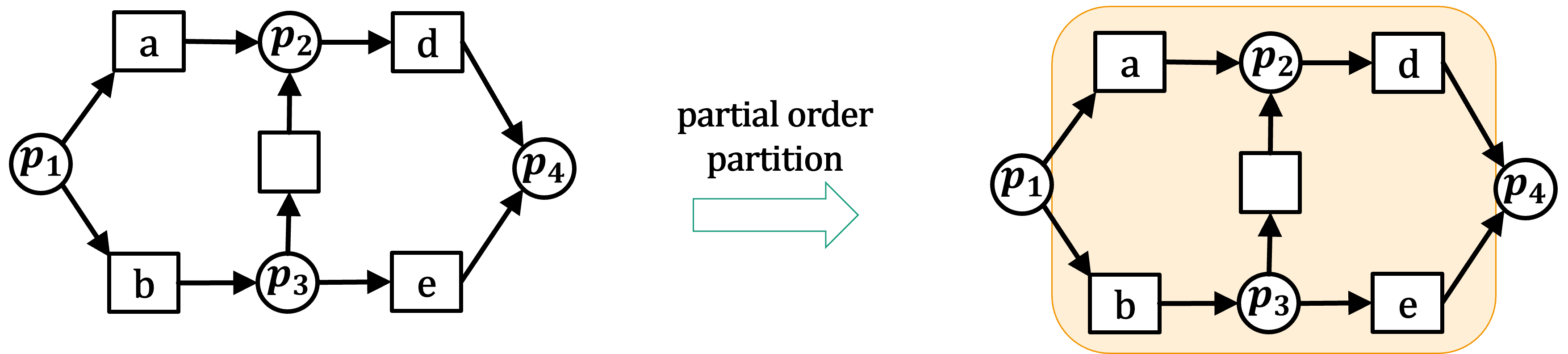}
    \caption{Free-choice WF-net with non-block-structured decision points.}\label{fig:neg_ex:1}
    \end{subfigure}
   
    \begin{subfigure}{\textwidth}
    \centering
    \includegraphics[width=0.68\textwidth]{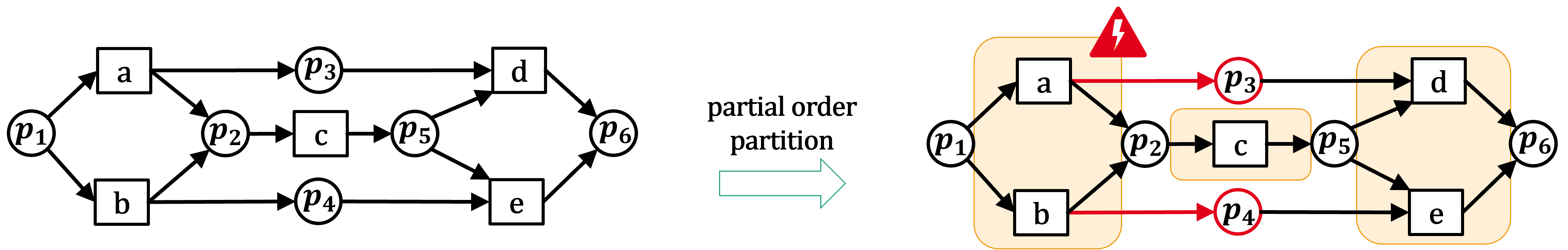}
    \caption{Non-free-choice WF-net with a long-term dependency between choices.}\label{fig:neg_ex:2}
    \end{subfigure}

    \begin{subfigure}{\textwidth}
    \centering
    \includegraphics[width=0.75\textwidth]{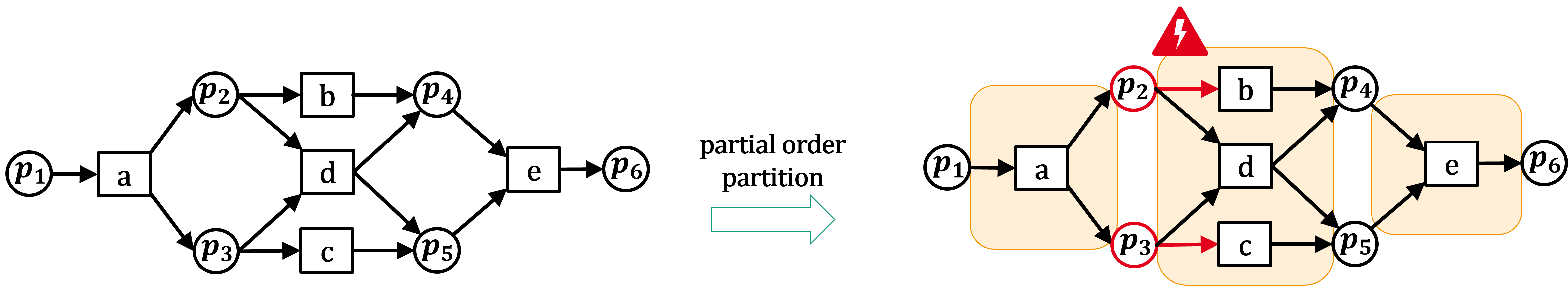}
    \caption{Non-free-choice WF-net where decision points combine choice with concurrency.}\label{fig:neg_ex:3}
    \end{subfigure}

    \caption{Examples where \autoref{alg:convert_pn_to_powl} returns null.\label{fig:neg_ex}}
\end{figure}

\begin{figure}[!t]
    \centering    
     \includegraphics[width=0.95\textwidth]{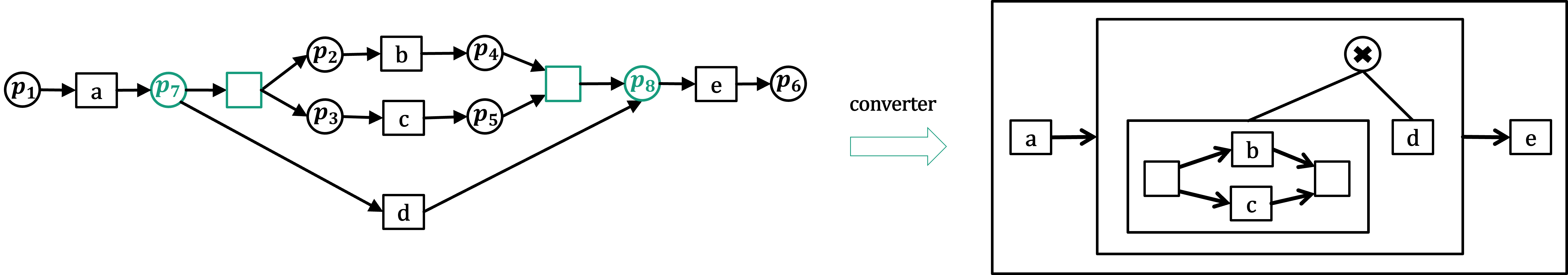}
    \caption{Result of applying the reduction rules (c.f. \autoref{fig:prepro}) to the WF-net from \autoref{fig:neg_ex:3}, followed by a successful conversion into POWL using \autoref{alg:convert_pn_to_powl}.\label{fig:ex_red}}
\end{figure}

\section{Correctness and Completeness Guarantees}\label{sec:gua}


In this section, we prove the correctness and completeness guarantees of \autoref{alg:convert_pn_to_powl}. Our proof strategy for correctness relies on structural induction on the WF-net. We show that for each pattern (XOR, loop, and partial order), the projection operation preserves safeness and soundness, allowing for the recursive application of the algorithm on the created subnets (\autoref{sec:lemmas:proj}). Then, we demonstrate that combining the languages of these subnets using the respective POWL operators accurately reflects the language of the original WF-net (\autoref{sec:lemmas:pattern}). We combine these findings to prove the overall correctness of the algorithm (\autoref{sec:gua:lang}). Finally, we show the completeness of our algorithm on semi-block-structured WF-nets (\autoref{sec:gua:redisc}).

\subsection{Projection Structural Guarantees}\label{sec:lemmas:proj}
This section proves that the XOR, loop, and partial order projections, when applied to safe and sound WF-nets, produce safe and sound WF-nets. This ensures that the recursive calls in \autoref{alg:convert_pn_to_powl} are always applied to valid inputs.

\begin{lemma}[XOR Projection Structural Guarantees]\label{thm:xor_soundness_preservation}
Let $(N, G)$ be an XOR pattern with $N = (P, T, F)$ and $G = \{T_1, \ldots, T_n\}$. Let $N_i = (P_i, T_i, F_i) \allowbreak = \xorproject(N, T_i)$ for each i ($1 \leq i \leq n$). Then $N_i$ is a safe and sound WF-net.
\end{lemma}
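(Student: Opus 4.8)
The plan is to verify the three WF-net conditions for $N_i$ (unique source, unique sink, connectivity), then safeness, then the three soundness conditions, reducing everything to the corresponding properties of $N$. The key structural observation driving the whole argument is that the XOR pattern condition forces the part $T_i$ to be "reachability-closed": no token can flow from a transition in $T_i$ to a transition outside $T_i$ or vice versa. Consequently every place of $P_i = P\project{T_i}$ that is not $N_{source}$ or $N_{sink}$ has all of its incident transitions inside $T_i$, so $\pre{}$ and $\post{}$ of such places are identical in $N_i$ and in $N$. I would first prove this closure statement carefully — that $N_{source}, N_{sink} \in P_i$, and that for $p \in P_i \setminus \{N_{source}, N_{sink}\}$ we have $\pre{p} \subseteq T_i$ and $\post{p} \subseteq T_i$ — since it is the lemma that makes all later steps routine. (One subtlety: one must check $N_{source}$ and $N_{sink}$ actually touch $T_i$; this follows from soundness of $N$, which guarantees every transition lies on a firing sequence from $[N_{source}]$ to $[N_{sink}]$, hence $T_i \neq \emptyset$ and some transition of $T_i$ consumes from a place fed ultimately from $N_{source}$, but by the closure argument the only "boundary" places available are $N_{source}$ and $N_{sink}$.)

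Given the closure statement, the WF-net conditions follow: $N_{source}$ has empty pre-set in $N_i$ (it had empty pre-set in $N$ and $F_i \subseteq F$), and it is the unique such place because any other $p \in P_i$ has $\pre{p} = \pre[N]{p} \ni$ something — here I need that interior places of $P_i$ really do have nonempty pre-sets in $N$, which holds since $N$ is a WF-net with unique source $N_{source}$. Symmetrically for $N_{sink}$. Connectivity of $N_i$ follows because any node of $N_i$ lies on an $N_{source}$-to-$N_{sink}$ path in $N$, and such a path, once it enters $T_i \cup (P_i\setminus\{N_{source},N_{sink}\})$, cannot leave it (closure), so the relevant path segment lies entirely within $N_i$; I would glue the initial segment $N_{source} \rightsquigarrow \text{node}$ and final segment $\text{node} \rightsquigarrow N_{sink}$ from the $N$-path after arguing both segments stay inside $N_i$.

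For the behavioral properties I would exploit that firing sequences of $N$ visiting only $T_i$ correspond exactly to firing sequences of $N_i$, because markings of $N$ reachable while only $T_i$-transitions fire are supported on $P_i$ (again by closure), and the enabling/firing rule depends only on the local pre/post sets, which agree. Safeness of $N_i$ is then inherited: a reachable marking of $N_i$ is (the restriction of) a reachable marking of $N$, which is safe. No dead transitions: for $t \in T_i$, soundness of $N$ gives a firing sequence from $[N_{source}]$ enabling $t$; I must argue this sequence can be taken to fire only transitions of $T_i$ — this uses the XOR structure, namely that from $[N_{source}]$ the choice of which part to execute is made immediately and a sequence reaching $t\in T_i$ need only fire $T_i$-transitions (here I would invoke the reachability-closure together with soundness to prune any extraneous firings, or more cleanly, note any firing sequence of $N$ that eventually fires $t$ must, by closure, have fired only $T_i$-transitions up to that point because a token reaching a place in $\pre{t}\subseteq P_i$ came through $T_i$). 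Option to complete and proper completion transfer similarly: a marking reachable in $N_i$ extends to one reachable in $N$, from which $N$ can reach $[N_{sink}]$; by closure this completing sequence uses only $T_i$, hence is a completing sequence in $N_i$; and proper completion of $N_i$ follows from that of $N$ since an $N_i$-marking with a token in $N_{sink}$ is an $N$-marking with a token in $N_{sink}$, forcing it to be $[N_{sink}]$.

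The main obstacle I anticipate is the precise handling of the boundary places $N_{source}$ and $N_{sink}$ and the claim that firing sequences can be "restricted" to $T_i$ without loss — i.e., making rigorous the intuition that in an XOR pattern the parts are dynamically independent. Everything else is bookkeeping on $F_i = F\project{P_i, T_i}$ and the observation that projection does not alter local pre/post sets of interior places; the real content is the one-time proof that $T_i$ together with $P_i$ forms a dynamically and statically isolated subnet sharing only $N_{source}$ and $N_{sink}$ with the rest of $N$.
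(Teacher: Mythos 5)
Your proposal is correct in substance but takes a genuinely different route for the hard part of the lemma. Both arguments pivot on the same structural fact — that no place other than $N_{source}$ and $N_{sink}$ connects transitions of $T_i$ to transitions of $T \setminus T_i$ — and your treatment of the WF-net conditions (unique source/sink, connectivity) matches the paper's. For safeness and soundness, however, the paper does not argue about firing sequences at all: it observes that $N$ is obtained from a smaller WF-net $N'$ by refining a fresh silent transition $t^{\texttt{+}}$ into the subnet $N_i$, with $P_i \cap P' = \{N_{source}, N_{sink}\}$, and then invokes a known composition/refinement theorem (cited as Theorem 3.4 of the composition reference) to conclude that $N_i$ inherits safeness and soundness from $N$. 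You instead prove these properties from first principles by establishing a bijection between firing sequences of $N_i$ and the $T_i$-only firing sequences of $N$, using the ``dynamic isolation'' of the parts. Your route is self-contained and avoids the external theorem, at the cost of having to make the restriction argument fully rigorous; the paper's route is shorter but outsources exactly that work. One point to tighten in your version: the claim that \emph{any} completing sequence of $N$ from a marking reachable in $N_i$ uses only $T_i$-transitions fails at the initial marking $[N_{source}]$, where $N$ may complete through a different part $T_j$. The fix is easy with your own tools — from $[N_{source}]$ take the $T_i$-only sequence enabling some $t \in T_i$ (your no-dead-transitions argument) and extend it to $[N_{sink}]$; from any later marking the tokens sit on interior places of $P_i$, so only $T_i$-transitions can ever fire and your argument goes through as stated.
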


\begin{proof}

\textbf{(1) $N_i$ is a WF-net:} By construction, no place $p \in P_i \setminus \{N_{source}, N_{sink}\}$ is connecting transitions from both $T_i$ and $T \setminus T_i$ in $N$. Therefore, all transitions and places in $N_i$ remain connected on a path from $N_{source}$ to $N_{sink}$.





\textbf{(2) We prove that $N_i$ is safe and sound.}
$N_i$ can be replaced by a silent transition $t^{\texttt{+}} \notin T$ in $N$ generating another WF-net $N' = (P', T', F')$ as follows:
\begin{itemize}
    \item $T' = (T \setminus T_i) \cup \{t^{\texttt{+}}\}$.
    \item $P' = P \project{T \setminus T_i}$.
    \item $F' = F\project{P', T \setminus T_i} \cup \{(N_{source}, t^{\texttt{+}}),  (t^{\texttt{+}}, N_{sink})\}$.
    
\end{itemize}

$N$ can be reconstructed from $N'$ as described in \cite[Theorem 3.4]{compTheorem} by replacing $t^{\texttt{+}}$ by $N_i$ if $P_i \cap P' = \{N_{source}, N_{sink}\}$. This condition is satisfied since no place connecting transitions from both $T_i$ and $T \setminus T_i$ exists in $N$. Therefore, $N_i$ is safe and sound by \cite[Theorem 3.4]{compTheorem}. \end{proof}

\begin{lemma}[Loop Projection Structural Guarantees]\label{thm:loop_soundness_preservation}
Let $(N, G)$ be an loop pattern with $N = (P, T, F)$; $T_{do}, T_{redo} \in G$; and $p_{do}, p_{redo} \in P$ as defined in \autoref{def:loop_pattern}. Let $N_{do} = (P_{do}, T_{do}, F_{do}) = \loopproject(N, T_{do})$ and $N_{redo} = (P_{redo}, T_{redo}, F_{redo}) = \loopproject(N, T_{redo})$. Then $N_{do}$ and $N_{redo}$ are safe and sound WF-nets.
\end{lemma}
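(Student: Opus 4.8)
The plan is to mirror the strategy used for \autoref{thm:xor_soundness_preservation}: reduce the claim for each projected subnet to the original net $N$ via the substitution/decomposition result \cite[Theorem 3.4]{compTheorem}. Concretely, I would show that $N$ can be obtained from a smaller WF-net $N'$ by substituting a single silent transition with the candidate subnet, and that the substituted subnet is language-/structure-compatible, so that safeness and soundness of $N$ force safeness and soundness of the subnet. The two cases $N_{do}$ and $N_{redo}$ are symmetric once we observe that requirements \textit{2--4} of \autoref{def:loop_pattern} make $p_{do}$ and $p_{redo}$ play interchangeable roles (one reached from $N_{source}$ via $t_{source}$, the other feeding $N_{sink}$ via $t_{sink}$), so I would write the argument for $N_{do}$ in full and note that the $N_{redo}$ case is obtained by swapping the roles of $p_{do}$ and $p_{redo}$.

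First I would verify that $N_{do}$ is indeed a WF-net. By \autoref{def:projection_loop}, $P_{do} = (P\project{T_{do}} \setminus \{p_{do}, p_{redo}\}) \cup \{N_{source}, N_{sink}\}$, and the added flows attach $N_{source}$ to exactly the transitions of $T_{do}$ that $p_{do}$ fed in $N$, and $N_{sink}$ to exactly those feeding $p_{redo}$. Requirements \textit{5--7} of \autoref{def:loop_pattern} are crucial here: since $T_{do} = \PTR(p_{do}, p_{redo})$, every transition of $T_{do}$ lies on a path from $p_{do}$ to $p_{redo}$ within $N$, hence on a path from $N_{source}$ to $N_{sink}$ in $N_{do}$; requirements \textit{6} and \textit{7} guarantee that $p_{do}$ has no incoming $T_{do}$-arc and $p_{redo}$ has no outgoing $T_{do}$-arc, so no transition of $T_{do}$ dangles. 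The unique-source / unique-sink conditions follow because $N_{source}$ and $N_{redo}$'s analogue are fresh relative to $P\project{T_{do}}\setminus\{p_{do},p_{redo}\}$, and any place with empty pre-set (resp. post-set) in $N_{do}$ would have already been a source-like (resp. sink-like) place between $p_{do}$ and $p_{redo}$ in $N$, contradicting soundness of $N$.

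For safeness and soundness, I would introduce the reduced net $N'$ obtained from $N$ by collapsing the do-subnet: replace all of $T_{do}$ (together with the internal places $P\project{T_{do}}\setminus(\Pre{}\cup\Post{})$) by a single silent transition $t^{+}$ with $\pre{t^{+}} = \{p_{do}\}$ and $\post{t^{+}} = \{p_{redo}\}$. One checks $N'$ is still a safe and sound WF-net (it is essentially $N$ with a block abstracted), and that $N$ is recovered from $N'$ by substituting $N_{do}$ for $t^{+}$, with the interface condition $P_{do}$-to-$N'$ intersection equal to the two interface places, which holds because no place of $P\project{T_{do}}$ other than $p_{do},p_{redo}$ touches $T\setminus T_{do}$ (this is exactly what requirements \textit{5--7} plus the structure around $t_{source},t_{sink}$ ensure, analogously to the $P_i \cap P' = \{N_{source},N_{sink}\}$ step in \autoref{thm:xor_soundness_preservation}). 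Then \cite[Theorem 3.4]{compTheorem} yields that $N_{do}$ is safe and sound. The main obstacle I anticipate is not the invocation of the composition theorem but the bookkeeping needed to confirm the interface/disjointness hypothesis: one must rule out a place that is simultaneously on a $p_{do}$-to-$p_{redo}$ path and shared with the redo-part or the silent-transition scaffolding, and argue that the renaming of $p_{do}/p_{redo}$ to $N_{source}/N_{sink}$ in the projection does not create spurious tokens — here safeness of $N$ together with requirement \textit{6} (no feedback arc into $p_{do}$ or $p_{redo}$ from within the respective part) is what makes the token count in the abstracted block match that of the single transition $t^{+}$.
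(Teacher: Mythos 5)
Your overall strategy matches the paper's: establish the WF-net property directly and then obtain safeness and soundness by collapsing the projected block into a single silent transition and invoking \cite[Theorem 3.4]{compTheorem}, exactly as in the proof of \autoref{thm:xor_soundness_preservation}. However, there is a gap at precisely the step you yourself flag as ``the main obstacle'': you never prove that no place other than $p_{do}$ and $p_{redo}$ connects $T_{do}$ to the rest of the net, and the one concrete argument you do give for the unique-source/unique-sink condition is not sound. A place $p \in P\project{T_{do}} \setminus \{p_{do}, p_{redo}\}$ whose pre-set in $N_{do}$ is empty is \emph{not} ``source-like in $N$'': since $N$ is a WF-net and $p \neq N_{source}$, we have $\pre{p} \neq \emptyset$ in $N$; the pre-set has simply been lost in the projection because it consists entirely of transitions outside $T_{do}$. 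So no contradiction with the soundness of $N$ arises from the mere existence of such a $p$, and the same issue affects the interface condition $P_{do} \cap P' = \{p_{do}, p_{redo}\}$ that you need in order to apply Theorem 3.4.

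The missing argument is a reachability one, driven by the definitions $T_{do} = \PTR(p_{do}, p_{redo})$ and $T_{redo} = \PTR(p_{redo}, p_{do})$, and it is the core of the paper's proof. Suppose some place $p \notin \{p_{do}, p_{redo}\}$ had, say, $t \in \pre{p} \cap T_{do}$ and $t' \in \post{p} \cap T_{redo}$. Since $t \in \PTR(p_{do}, p_{redo})$, there is a path from $p_{do}$ to $t$ that avoids $p_{redo}$; extend it through $p$ to $t'$, and from $t'$ (which lies on a path from $p_{redo}$ to $p_{do}$) continue until $p_{redo}$ is reached. The concatenated path witnesses $t' \in \PTR(p_{do}, p_{redo}) = T_{do}$, contradicting $t' \in T_{redo}$; the remaining cases are symmetric. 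This single argument simultaneously yields the absence of dangling places in $N_{do}$, the connectivity of every node on a path from $N_{source}$ to $N_{sink}$, and the disjointness hypothesis of the composition theorem. With it supplied, the rest of your proof goes through essentially as written.
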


\begin{proof}

\textbf{(1) We show that $N_{do}$ is a WF-net (analogous proof for $N_{redo}$).}

\textbf{(1.1) Unique source and sink:} Assume for contradiction that there exists a place $p \in P_{do} \setminus \{N_{source}, N_{sink}\}$ such that $\pre{p} = \emptyset$ or $\post{p} = \emptyset$ in $N_{do}$. This place must be connecting transitions from both $T_{do}$ and $T_{redo}$ since $\pre{p} \neq \emptyset$ and $\post{p} \neq \emptyset$ in $N$. Without loss of generality, assume that there exist transitions $t \in T_{do}$ and $t' \in T_{redo}$ where $(t,p) \in F$ and $(p,t') \in F$. Since $t \in T_{do} = \PTR(p_{do}, p_{redo})$, there exists a path starting from $p_{do}$ to $t$, without passing through $p_{redo}$ in-between. From $t$, the path can continue through $p$ to reach $t'$. From $t'$, we know that we can eventually reach $p_{redo}$ (we reach $p_{do}$ first and then take any path from $p_{do}$ to $p_{redo}$). Combining all of these sequences ($p_{do} \rightarrow \ldots \rightarrow t \rightarrow p \rightarrow t' \rightarrow ... \rightarrow p_{redo})$ implies that $t' \in T_{do}$. This contradicts our assumption that $t' \in T_{redo}$.

\textbf{(1.2) Connectivity:} We showed that no place is connecting transitions from both $T_{do}$ and $T_{redo}$ except the loop entry and exit places. Therefore, all transitions and places in $N_{do}$ remain connected on a path from $N_{source}$ to $N_{sink}$.

\textbf{(2) Safeness and soundness:} The proof that $N_{do}$ and $N_{redo}$ are safe and sound is analogous to the safeness and soundness proof of \autoref{thm:xor_soundness_preservation}. \end{proof}

\begin{lemma}[Partial Order Projection Structural Guarantees]\label{thm:po_soundness_preservation}
Let $(N, G)$ be a partial order pattern with $N = (P, T, F)$ and $G = \{T_1, \ldots, T_n\}$. Let $N_i = (P_i, T_i, F_i) \allowbreak = \poproject(N, T_i)$ for each i ($1 \leq i \leq n$). Then $N_i$ is a safe and sound WF-net.
\end{lemma}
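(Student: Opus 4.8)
The plan is to follow the two-step pattern of the proof of \autoref{thm:xor_soundness_preservation}: first show that each $N_i$ is a WF-net, and then show it is safe and sound by embedding it back into $N$ and invoking \cite[Theorem 3.4]{compTheorem}. Write $N'$ for the net $(P', T_i, F')$ of \autoref{def:projection_po} \emph{before} normalization, so that $N_i = \mathit{Normalize}(N', p_s, p_e)$. Since normalization only prepends a fresh source place with a silent transition into $p_s$ when $\pre{p_s}\neq\emptyset$ (and symmetrically appends a fresh sink after $p_e$), and such a step changes neither the language nor safeness nor soundness, it suffices to establish the claim for $N'$ and then observe that normalization turns $N'$ into a genuine WF-net while preserving everything else.

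\emph{$N_i$ is a WF-net.} I would first settle the source and sink conditions. The only places of $N'$ besides $p_s$ and $p_e$ lie in $P\project{T_i}\setminus(\Pre{T_i}\cup\Post{T_i})$; for any such place $p$, if all incoming flows of $p$ in $N$ came from $T\setminus T_i$ (or $p$ had none) while some outgoing flow of $p$ went to $T_i$, then $p\in\Pre{T_i}$ by the definition of entry points, a contradiction, and symmetrically with $\Post{T_i}$; hence every such $p$ retains a non-empty pre-set and post-set in $N'$. Thus $p_s$ is the only candidate for a source and $p_e$ the only candidate for a sink, and normalization supplies a proper source (resp.\ sink) exactly when $p_s$ (resp.\ $p_e$) still carries incoming (resp.\ outgoing) flows. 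For connectivity I would argue that every $t\in T_i$ lies on a path from an entry point of $T_i$ to an exit point of $T_i$ that uses only transitions of $T_i$: soundness of $N$ yields a path $N_{source}\to\cdots\to t\to\cdots\to N_{sink}$, and taking the maximal $T_i$-segment of that path around $t$, the place immediately before the segment has its preceding transition (if any) outside $T_i$, hence is an entry point merged into $p_s$, and likewise the place immediately after the segment is an exit point merged into $p_e$. After the merge this gives a path through $t$ from the source of $N_i$ to its sink, and the same reasoning covers the interior places.

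\emph{$N_i$ is safe and sound.} Here I would contract the $T_i$-subnet inside $N$ to a single silent transition $t^{\texttt{+}}\notin T$ with $\pre{t^{\texttt{+}}}=\Pre{T_i}$ and $\post{t^{\texttt{+}}}=\Post{T_i}$, obtaining a WF-net $\widehat N$. Conditions (3) and (4) of \autoref{def:po_pattern} — unique local start and unique local end — guarantee that all places of $\Pre{T_i}$ are pairwise $\approx_{T_i}$-equivalent, and likewise all places of $\Post{T_i}$, so the interface between the $T_i$-subnet and the rest of $N$ has exactly the shape needed to recover $N$ from $\widehat N$ by substituting $N'$ for $t^{\texttt{+}}$, with the source $p_s$ of $N'$ identified with $\Pre{T_i}$ and its sink $p_e$ identified with $\Post{T_i}$. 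Then \cite[Theorem 3.4]{compTheorem} yields that $N'$ is safe and sound, hence so is $N_i$ after normalization, exactly as in the corresponding step of \autoref{thm:xor_soundness_preservation}.

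The main obstacle is this last step, precisely because $T_i$ may have several entry or several exit points. Conditions (3) and (4) only force these places to agree on their connections \emph{towards} $T_i$, not on their connections towards $T\setminus T_i$, so the contracted net $\widehat N$ must retain all of $\Pre{T_i}$ and $\Post{T_i}$ and the substitution of $N'$ for $t^{\texttt{+}}$ must identify the single place $p_s$ (resp.\ $p_e$) of $N'$ with the whole set $\Pre{T_i}$ (resp.\ $\Post{T_i}$); one then has to check that \cite[Theorem 3.4]{compTheorem} still applies in this shape, using the $\approx_{T_i}$-uniformity from (3) and (4) to see that the merged place behaves like any single entry or exit place, so that reachable markings of $N$ restrict correctly to reachable markings of $N'$ and conversely. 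A secondary subtlety, already noted above, is the connectivity argument: a path through $t\in T_i$ in $N$ may leave and re-enter the $T_i$-region several times, and one must observe that every re-entry place is again an entry point (merged into $p_s$) and every departure place an exit point (merged into $p_e$), so that the internal $T_i$-segment around $t$ still connects $p_s$ to $t$ and $t$ to $p_e$ in $N_i$.
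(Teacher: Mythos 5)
Your first half (that $N_i$ is a WF-net: no spurious sources or sinks among the interior places, connectivity via maximal $T_i$-segments of a path through $N$) is fine and in fact more detailed than what the paper records. The problem is the second half. You prove safeness and soundness by contracting the $T_i$-subnet to a single silent transition $t^{\texttt{+}}$ with $\pre{t^{\texttt{+}}}=\Pre{T_i}$ and $\post{t^{\texttt{+}}}=\Post{T_i}$ and then invoking the composition theorem \cite[Theorem 3.4]{compTheorem}, exactly as in \autoref{thm:xor_soundness_preservation}. But that theorem replaces a transition by a WF-net whose unique source and sink coincide with \emph{single} places of the host net (the XOR proof explicitly checks $P_i \cap P' = \{N_{source}, N_{sink}\}$). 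Here the interface is the pair of sets $\Pre{T_i}$ and $\Post{T_i}$, each of which may contain several places whose connections to $T\setminus T_i$ differ, and the projection merges each set into one fresh place $p_s$ or $p_e$. So $N$ is \emph{not} the composition of $\widehat N$ and $N'$ in the sense of the cited theorem, and the theorem does not apply as stated. You identify this obstacle yourself, but you resolve it only with ``one then has to check that the theorem still applies in this shape'' --- and that check is precisely the content of the lemma, not a side condition. Conditions (3) and (4) of \autoref{def:po_pattern} give you $\approx_{T_i}$-uniformity only on the $T_i$-side of the interface, so the claim that ``the merged place behaves like any single entry or exit place'' is exactly what still needs a proof.

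The paper takes a different and more elementary route that avoids the composition theorem entirely: it exhibits a direct correspondence between reachable markings of $N_i$ and reachable markings of $N$ restricted to $P\project{T_i}$ (mapping $p_s$ to the whole set $\Pre{T_i}$ and $p_e$ to $\Post{T_i}$, and using the unique local start and end conditions to argue that all places of $\Pre{T_i}$, respectively $\Post{T_i}$, are marked together whenever one of them is needed to enable a transition of $T_i$). It then transfers ``no dead transitions'' and ``option to complete'' along this correspondence by projecting firing sequences of $N$ onto $T_i$, and derives safeness by contradiction from the safeness of $N$. If you want to keep your contraction-based route, you would have to formulate and prove a multi-place-interface version of the composition theorem, and the proof of that generalization is essentially this same marking-correspondence argument --- so the detour through $\widehat N$ buys you nothing. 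As it stands, your proposal has a genuine gap at its central step.
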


\begin{proof}

\textbf{(1) $N_i$ is a WF-net:} By construction, every node in $N_i$ is on a path from $p_s$ and $p_e$. The applied normalization ensures that additional source and/or sink places are inserted in case $\pre{p_s} \neq \emptyset$ and/or $\post{p_e} \neq \emptyset$, respectively.




\textbf{(2) We prove that $N_i$ is sound.}
 
\textbf{(2.1) No dead transitions:}
    Consider any transition $t \in T_i$. Since $N$ is sound, there exists a reachable marking $M$ in $N$ that enables $t$. Consider the marking $M_i$ in $N_i$ defined as follows for $p \in P_i$:
\begin{equation*}
M_i(p) =
    \begin{cases}
        M(p) & \text{if } p \in P, \\
        1 & \text{if } p = p_s \text{ and } M(p) = 1 \text{ for all } p \in \Pre{T_i}, \\
        1 & \text{if } p = p_e \text{ and } M(p) = 1 \text{ for all } p \in \Post{T_i}, \\
        0 & \text{otherwise (for places added by normalization)} .
    \end{cases}
\end{equation*}
        
    The projection operation only removes places and transitions that are not in $T_i$ and replaces the connections to $\Pre{T_i}$ and $\Post{T_i}$ with $p_s$ and $p_e$, respectively. Thus, any firing sequence leading to $M$ in $N$ can be transformed into a firing sequence leading to $M_i$ in $N_i$ by removing transitions not in $T_i$ (and potentially  firing the additional silent transition added by normalization). Therefore, $M_i$ is reachable from $[{N_i}_{source}]$ in $N_i$. The unique local start and end properties (c.f. \autoref{def:po_pattern}) ensure that if $t$ needs to consume a token from a place $p \in \Pre{T_i}$ (or $p \in \Post{T_i}$) in $N$, then all other places in $\Pre{T_i}$ (or in $\Post{T_i}$, respectively) must be included in $M$ as well. This implies that $M$ and $M_i$ agree on all places that are needed to enable $t$, including start and end places. Therefore, $t$ is enabled in $M_i$.


\textbf{(2.2) Option to complete:}
    Consider any marking $M_i$ reachable from $[{N_i}_{source}]$ in $N_i$. Due to the unique local start and end properties (c.f. \autoref{def:po_pattern}), there must exist a reachable marking $M$ in $N$ that enables the transitions in $T_i$ in the same way as $M_i$ does in $N_i$, i.e., $M$ is defined for $p \in P\project{T_i}$ as follows:
\begin{equation*}
    M(p) = 
    \begin{cases}
        M_i(p_s) & \text{if } p \in \Pre{T_i}, \\
        M_i(p_e) & \text{if } p \in \Post{T_i}, \\
        M_i(p) & \text{otherwise.}
    \end{cases}
\end{equation*} 

Since $N$ is sound, there exists a firing sequence $\sigma$ from $M$ to $[N_{sink}]$ in $N$. We can construct a corresponding firing sequence $\sigma_i$ from $M_i$ to $[{N_i}_{sink}]$ in $N_i$ by taking only the transitions in $\sigma$ that belong to $T_i$ (and potentially  firing the additional silent transition added by normalization). 


\textbf{(3) $N_i$ is safe:}
Assume, for the sake of contradiction, that there exist a reachable marking $M_i$ in $N_i$ and a place $p \in P_i$ such that $M_i(p) \geq 2$. 
There must exist a reachable marking $M$ in $N$ that enables the transitions in $T_i$ in the same way as $M_i$ does in $N_i$ (c.f. the proof of ``option to complete''). Then there exists $p' \in P\project{T_i}$ such that $M(p') = M_i(p) \geq 2$. This violates the safeness of $N$. \end{proof}

\subsection{Pattern LanguagePreservation Guarantees}\label{sec:lemmas:pattern}
This section establishes the language preservation guarantees of the identified patterns. We prove that the XOR, loop, and partial order patterns, when translated into their corresponding POWL representations, result in POWL models that have the same language as the original WF-net.

\begin{lemma}[XOR Pattern Language Preservation]\label{thm:xor_lang_preservation}
Let $(N, G)$ be an XOR pattern and $G = \{T_1, \ldots, T_n\}$. Let $\powl_1, \ldots, \powl_n$ be POWL models  such that $\lang(\powl_i) = \lang(\xorproject(N, T_i))$ for each $i$ ($1 \leq i \leq n$). Then, $\lang(\xor(\powl_1, \ldots, \powl_n)) = \lang(N)$.
\end{lemma}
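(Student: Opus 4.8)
The plan is to show the two-way inclusion $\lang(\xor(\powl_1, \dots, \powl_n)) = \lang(N)$ by relating firing sequences of $N$ to firing sequences of the subnets $N_i = \xorproject(N, T_i)$. By \autoref{def:lang}, $\lang(\xor(\powl_1, \dots, \powl_n)) = \bigcup_{1 \leq i \leq n} \lang(\powl_i) = \bigcup_{1 \leq i \leq n} \lang(N_i)$, so it suffices to prove $\lang(N) = \bigcup_{1 \leq i \leq n} \lang(N_i)$. The key structural fact, already established in the proof of \autoref{thm:xor_soundness_preservation}, is that no place of $N$ other than $N_{source}$ and $N_{sink}$ is adjacent to transitions from two different parts $T_i$ and $T_j$; equivalently, the defining condition of an XOR pattern (\autoref{def:pattern_xor}) forces any firing sequence to commit to a single part after leaving $N_{source}$.

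First I would prove $\bigcup_{1 \leq i \leq n} \lang(N_i) \subseteq \lang(N)$. Fix $i$ and a firing sequence $\rho$ of $N_i$ from $[{N_i}_{source}]$ to $[{N_i}_{sink}]$. Since $N_i$ is obtained from $N$ simply by deleting the transitions outside $T_i$ together with the places incident only to them, and since $P_i = P\project{T_i}$ contains $N_{source}$ and $N_{sink}$, the same sequence $\rho$ (whose transitions all lie in $T_i \subseteq T$) is firable in $N$ from $[N_{source}]$: every place that $\rho$ consumes from or produces into is present in $P_i$ with the identical flow relation $F_i = F\project{P_i, T_i}$, and tokens in $P \setminus P_i$ are never touched. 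It reaches $[N_{sink}]$ for the same reason. Hence the projection of $\rho$'s labels (which equals $\lab(\rho)$ since no relabeling occurs) lies in $\lang(N)$, giving the inclusion.

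For the converse $\lang(N) \subseteq \bigcup_{1 \leq i \leq n} \lang(N_i)$, take a firing sequence $\rho$ of $N$ from $[N_{source}]$ to $[N_{sink}]$. The first fired transition $t_1$ consumes the token from $N_{source}$ and lies in some part $T_i$. I claim every transition in $\rho$ lies in $T_i$: by soundness and connectivity every transition $t$ appearing in $\rho$ is reachable from $t_1$ via the firing sequence, so in particular $t_1 \TTR^{+} t$ (tokens flow from $t_1$ along places to later transitions), and since $(N,G)$ is an XOR pattern, \autoref{def:pattern_xor} forces $G_t = G_{t_1} = T_i$. One must argue this carefully: the token produced by $t_1$ sits in a place in $P\project{T_i}$, is later consumed by a transition that must then also be in $T_i$ (as the place is not shared), and inductively the entire firing stays within transitions incident only to places of $P\project{T_i}$. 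Consequently $\rho$ uses only places in $P_i$ and only flows in $F_i$, so $\rho$ is a valid firing sequence of $N_i$ from $[{N_i}_{source}]$ to $[{N_i}_{sink}]$, whence $\lab(\rho) \in \lang(N_i)$.

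I expect the main obstacle to be making the ``stays within one part'' argument airtight: one needs to move carefully between the reachability relation $\TTR$ used in \autoref{def:pattern_xor} and the token-flow argument along an actual firing sequence, and to confirm that tokens never escape from $P\project{T_i}$ into a shared place — which is exactly where the no-shared-place observation from \autoref{thm:xor_soundness_preservation} does the work. The rest is bookkeeping about projections of firing sequences, where the only subtlety is that a sequence firable in $N_i$ is firable in $N$ because the extra places $P \setminus P_i$ are irrelevant to transitions in $T_i$, and a sequence firable in $N$ and confined to $T_i$ is firable in $N_i$ because $F_i$ retains all flows among $P_i$ and $T_i$ verbatim.
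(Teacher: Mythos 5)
Your proposal is correct and follows essentially the same route as the paper: both reduce the claim to the key equality $\lang(N) = \bigcup_{i=1}^{n} \lang(\xorproject(N, T_i))$ and then apply the XOR semantics from \autoref{def:lang}. The paper simply asserts this equality in two sentences (appealing to the fact that an XOR pattern forces each run to commit to exactly one part), whereas you supply the two-way inclusion and the token-flow argument that the paper leaves implicit.
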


\begin{proof}
Since $(N, G)$ forms an XOR Pattern, $N$ enforces a choice of transitions from exactly one part $T_i \in G$ to be executed. $\xorproject(N, T_i)$ captures exactly the executions involving the transitions in $T_i$. Therefore, we conclude:
\[
\lang(N) = \bigcup_{i=1}^n \lang(\xorproject(N, T_i)).
\]


By combining this equality with \autoref{def:lang} and the assumption that $\lang(\powl_i) = \lang(\xorproject(N, T_i))$ for each $1 \leq i \leq n$, we get: $\lang(N) = \lang(\xor(\powl_1, \ldots, \powl_n))$.


\end{proof}

\begin{lemma}[Loop Pattern Language Preservation]\label{thm:loop_lang_preservation}
Let $(N, G)$ be a loop pattern with $N = (P, T, F)$; $T_{do}, T_{redo} \in G$; and $p_{do}, p_{redo} \in P$ as defined in \autoref{def:loop_pattern}. Let $\powl_{do}$ and $\powl_{redo}$ be POWL models such that $\lang(\loopproject(N, T_{do})) = \lang(\powl_{do})$ and $\lang(\loopproject(N, T_{redo})) = \lang(\powl_{redo})$. Then, $\lang(\Loop(\powl_{do}, \powl_{redo})) = \lang(N)$.
\end{lemma}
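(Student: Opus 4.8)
The plan is to mirror the structure of the XOR case: first show a language identity for the net $N$ in terms of the two loop projections, and then translate that identity into POWL via \autoref{def:lang}. Concretely, I would argue that every firing sequence of $N$ (as a word over $\ActUniverse$, with $\tau$-transitions contributing the empty word) decomposes uniquely as an alternation of a "do-phase" followed by zero or more "redo-phase, do-phase" pairs. This decomposition is forced by the structural requirements of \autoref{def:loop_pattern}: the source place feeds only $t_{source}$, which deposits a token in $p_{do}$; from $p_{do}$, conditions \textit{5--7} guarantee that the only transitions that can fire are those in $T_{do} = \PTR(p_{do}, p_{redo})$ until a token reaches $p_{redo}$; once at $p_{redo}$, either $t_{sink}$ fires (ending the run) or a redo-phase using transitions of $T_{redo} = \PTR(p_{redo}, p_{do})$ brings a token back to $p_{do}$, and the alternation repeats. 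Safeness is what makes "a token reaches $p_{redo}$" an unambiguous, well-defined event and rules out interleaving of do- and redo-transitions across the $p_{do}$/$p_{redo}$ boundary.

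Next I would identify each phase's contribution with the language of the corresponding projection. The key observation is that $\loopproject(N, T_{do})$ is exactly the subnet reachable from $p_{do}$ to $p_{redo}$ with those two places relabeled as source and sink (via \autoref{def:projection_loop}), and soundness of $N$ restricted to this fragment means its firing sequences from $[{N_{do}}_{source}]$ to $[{N_{do}}_{sink}]$ are precisely the words a single do-phase can produce; symmetrically for $\loopproject(N, T_{redo})$. Hence a complete firing sequence of $N$ projects (after dropping the silent $t_{source}$ and $t_{sink}$, which contribute $\langle\rangle$) to a word in $L_{do} \cdot (L_{redo} \cdot L_{do})^*$ where $L_{do} = \lang(\loopproject(N, T_{do}))$ and $L_{redo} = \lang(\loopproject(N, T_{redo}))$, and conversely any such word is realizable by reassembling phase firing sequences. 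This gives
\[
\lang(N) = L_{do} \cdot (L_{redo} \cdot L_{do})^*.
\]
Combining with the assumptions $\lang(\powl_{do}) = L_{do}$, $\lang(\powl_{redo}) = L_{redo}$, and the semantics $\lang(\Loop(\powl_1, \powl_2)) = L_1 \cdot (L_2 \cdot L_1)^*$ from \autoref{def:lang} immediately yields $\lang(\Loop(\powl_{do}, \powl_{redo})) = \lang(N)$.

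The main obstacle is rigorously justifying the phase-decomposition claim, i.e. that no firing sequence of $N$ can "mix" do- and redo-transitions in a way not captured by clean alternation. This is where conditions \textit{5--7} of \autoref{def:loop_pattern} together with safeness do the real work: condition \textit{6} ($\pre{p_{do}} \cap T_{do} = \emptyset$ and $\pre{p_{redo}} \cap T_{redo} = \emptyset$) prevents a do-transition from refilling $p_{do}$ mid-phase and a redo-transition from refilling $p_{redo}$, while condition \textit{7} ($\post{p_{redo}} \cap T_{do} = \emptyset$ and $\post{p_{do}} \cap T_{redo} = \emptyset$) prevents a token in $p_{redo}$ from feeding a do-transition and vice versa. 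I would make this precise by an induction on the length of a firing sequence, tracking an invariant of the form "the current marking is either confined to the do-subnet's places plus possibly $p_{redo}$, or to the redo-subnet's places plus possibly $p_{do}$, but never straddles both"; safeness ensures these cases are mutually exclusive at the hand-off places. The forward direction (decompose a run of $N$) and the backward direction (assemble a run of $N$ from phase runs) are then routine once this invariant is established, so I would state the invariant carefully and treat the bookkeeping as standard.
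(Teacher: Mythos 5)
Your proposal is correct and follows essentially the same route as the paper's proof: establish the identity $\lang(N) = \lang(N_{do}) \cdot (\lang(N_{redo}) \cdot \lang(N_{do}))^*$ via the forced alternation of do- and redo-phases, then conclude by the loop semantics of \autoref{def:lang}. You actually justify the phase-decomposition step (conditions \textit{5--7}, safeness, and the proposed marking invariant) in more detail than the paper, which simply asserts it from soundness of the projections and the absence of places connecting the two parts other than $p_{do}$ and $p_{redo}$.
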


\begin{proof}
Let $N_{do} = \loopproject(N, T_{do})$ and $N_{redo} = \loopproject(N, T_{redo})$. Due to the soundness of $N_{do}$ and $N_{redo}$ and the absence of places connecting transitions from both the do- and redo-parts except $p_{do}$ and $p_{redo}$ (c.f., the proof of \autoref{thm:loop_soundness_preservation}), we conclude that the language of $N$ consists of sequences that can be segmented into subsequences of complete executions of $N_{do}$ and $N_{redo}$, interleaved as follows:
\[
\lang(N) = \lang(N_{do}) \cdot (\lang(N_{redo}) \cdot \lang(N_{do}))^*.
\]

By combining this equality with \autoref{def:lang} and the assumption that $\lang(\powl_{do}) = \lang(N_{do}) \ \wedge \ \lang(\powl_{redo}) = \lang(N_{redo})$, we get: $\lang(N) = \lang(\Loop(\powl_{do},\powl_{redo}))$.

\end{proof}

\begin{lemma}[Partial Order Pattern Language Preservation]\label{thm:po_lang_preservation}
Let $(N, G)$ be a partial order pattern with $N = (P, T, F)$, $G = \{T_1, \ldots, T_n\}$, and $\po \in \Orders{n}$ as defined in \autoref{def:po_pattern}. Let $\powl_1, \ldots, \powl_n$ be POWL models such that $\lang(\poproject(N, T_i)) \allowbreak = \lang(\powl_i)$ for each $i$ ($1 \leq i \leq n$). Then, $\lang(\po(\powl_1, \ldots, \powl_n)) = \lang(N)$.
\end{lemma}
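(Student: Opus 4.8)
The plan is to mirror the structure of the proofs of \autoref{thm:xor_lang_preservation} and \autoref{thm:loop_lang_preservation}: first establish a purely net-theoretic characterization of $\lang(N)$ in terms of the languages of the projections $N_i = \poproject(N, T_i)$, and then combine this with \autoref{def:lang} and the assumption $\lang(\powl_i) = \lang(N_i)$ to conclude. Concretely, the goal is to show
\[
\lang(N) = \{\sigma \in \shuffle_{\po}(\sigma_1, \ldots, \sigma_n) \mid \forall_{1 \leq i \leq n} \ \sigma_i \in \lang(N_i)\},
\]
after which the lemma follows immediately by rewriting the right-hand side via $\lang(\powl_i) = \lang(N_i)$ and \autoref{def:lang}.

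First I would set up the correspondence between firing sequences of $N$ and tuples of firing sequences of the $N_i$. Given a firing sequence $\sigma$ of $N$ from $[N_{source}]$ to $[N_{sink}]$, define $\sigma_i$ to be the projection of $\sigma$ onto the transitions of $T_i$ (plus, where needed, a single firing of the silent transition introduced by normalization at the start/end of $N_i$). Conversely, given $\sigma_i \in \lang(N_i)$ for each $i$ and a shuffle $\sigma \in \shuffle_{\po}(\sigma_1, \ldots, \sigma_n)$, I would argue $\sigma$ is a valid firing sequence of $N$. The key structural facts enabling this, all coming from \autoref{def:po_pattern}, are: (i) condition 1 guarantees every place is a ``decision point'' entirely inside one part, so that tokens never need to be synchronized \emph{within} a part across the boundary; (ii) the unique-local-start and unique-local-end properties (conditions 3 and 4) ensure that the entry points $\Pre{T_i}$ all become marked simultaneously and the exit points $\Post{T_i}$ all become marked simultaneously — this is exactly what lets us collapse $\Pre{T_i}$ to the single place $p_s$ and $\Post{T_i}$ to $p_e$ in the projection without changing which transitions are enabled (this is already exploited in the proofs of \autoref{thm:po_soundness_preservation}); and (iii) condition 2 says $\po = \closure{\mathit{order}(N,G)}$ is a genuine partial order, so that the only ordering constraints $N$ imposes \emph{between} parts are captured by $\po$, and any interleaving consistent with $\po$ is realizable. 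Soundness and safeness of $N$ and of the $N_i$ (\autoref{thm:po_soundness_preservation}) are used throughout to guarantee that the reconstructed sequences actually reach the final marking and never violate safeness.

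For the ``$\subseteq$'' direction I would take $\sigma \in \lang(N)$, form the projections $\sigma_i$, note each $\sigma_i \in \lang(N_i)$ (using the marking-transfer argument from \autoref{thm:po_soundness_preservation}'s ``option to complete'' part), and observe that because transitions in different parts interact only through the shared entry/exit places, the relative order of transitions from $T_i$ and $T_j$ in $\sigma$ is constrained exactly by whether $(i,j) \in \mathit{order}(N,G)$ or its reverse; hence $\sigma \in \shuffle_{\po}(\sigma_1, \ldots, \sigma_n)$. For ``$\supseteq$'' I would take $\sigma_i \in \lang(N_i)$ and $\sigma \in \shuffle_{\po}(\sigma_1,\ldots,\sigma_n)$ and fire $\sigma$ step by step in $N$, maintaining the invariant that the current marking of $N$, restricted to the places of part $i$ (with $\Pre{T_i}/\Post{T_i}$ identified appropriately), equals the marking reached by the prefix of $\sigma$ restricted to $T_i$ in $N_i$; conditions 3 and 4 make this invariant well-defined, and $\po$-consistency of the shuffle guarantees that whenever a transition of $T_i$ is next in $\sigma$, its required entry tokens from parts $j$ with $j \po i$ are already present.

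The main obstacle I expect is the ``$\supseteq$'' direction: making rigorous the claim that \emph{every} $\po$-respecting interleaving of complete executions of the parts is firable in $N$. This requires carefully arguing that no ``hidden'' synchronization between parts exists beyond $\mathit{order}(N,G)$ — i.e., that when the shuffle schedules a transition $t \in T_i$, all of $t$'s input places are marked. Input places of $t$ lying strictly inside part $i$ are handled by the within-part invariant; the only input places that could be shared are in $\Pre{T_i}$, and these are filled precisely when all parts $j$ with $(j,i) \in \mathit{order}(N,G)$ have completed, which the $\po$-respecting shuffle ensures since $\po \supseteq \mathit{order}(N,G)$. The unique-local-start/end conditions are what guarantee ``all of $\Pre{T_i}$ filled at once'' rather than a partial fill that would block $t$; I would single this out as the crux of the argument and lean on the detailed reachability bookkeeping already developed in the proof of \autoref{thm:po_soundness_preservation}.
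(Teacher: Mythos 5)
Your proposal is correct and follows essentially the same route as the paper: both reduce the lemma to the set equality $\lang(N) = \{\sigma \in \shuffle_{\po}(\sigma_1,\ldots,\sigma_n) \mid \forall_{1\leq i\leq n}\ \sigma_i \in \lang(N_i)\}$ and prove the two inclusions using condition 1 (decision points confined to a part), the unique-local-start/end conditions, and the safeness/soundness guarantees from \autoref{thm:po_soundness_preservation}. The only sub-step the paper makes explicit that you gloss over is that \emph{every} part $T_i$ must actually be executed in every run of $N$ (otherwise the projection $\sigma_i$ would be empty and need not lie in $\lang(N_i)$); the paper derives this from condition 1 before arguing $\sigma_i \in \lang(N_i)$, and you should add that step to the ``$\subseteq$'' direction.
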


\begin{proof}
Let $N_i = \poproject(N, T_i)$ for each $i$ ($1 \leq i \leq n$). By combining  the semantics of partial orders (c.f. \autoref{def:lang}) with the assumption that $\lang(\powl_i) = \lang(N_i)$ for each $i$ ($1 \leq i \leq n$), we can write:
\[
\lang(\po(\powl_1, \ldots, \powl_n)) = \{\sigma \in \shuffle_{\po}(\sigma_1 , ..., \sigma_n) \ | \ \forall_{1 \leq i \leq n} \sigma_i \in \lang(N_i)\}.
\]


\textbf{(1) Proof for $\lang(N) \subseteq \{\sigma \in \shuffle_{\po}(\sigma_1 , ..., \sigma_n) \ | \ \forall_{1 \leq i \leq n} \sigma_i \in \lang(N_i)\}$:} Let $\sigma \in \lang(N)$ be any firing sequence of $N$. We construct subsequences $\sigma_1, ..., \sigma_n$ by projecting $\sigma$ onto $T_i$ for each $i$ ($1 \leq i \leq n$). We need to show that $\sigma$ can be expressed as a shuffle of $\sigma_1, ..., \sigma_n$, respecting the partial order $\po$. This can be derived by proving the following three key points:

\begin{itemize}
    \item All parts $T_i \in G$ are present within $\sigma$.
    \item Each $\sigma_i$ is a firing sequence from $N_i$ (i.e., $\sigma_i \in \lang(N_i)$). 
    \item The partial order between the subsequences is preserved in $\sigma$  (i.e., $\sigma \in \shuffle_{\po}(\sigma_1 , ..., \sigma_n)$).
\end{itemize}

\textbf{(1.1) All parts $T_i \in G$ are present within $\sigma$:} Assume, for the sake of contradiction, that there exists a part $T_i \in G$ not present within $\sigma$, i.e., $\sigma_i = \langle \rangle$. Since $N$ is a WF-net, this implies that there must be a \emph{decision point} $p \in P$ where some outgoing paths from $p$ eventually lead to transitions in $T_i$ and other outgoing paths from $p$ do not. This violates the first condition of \autoref{def:po_pattern}, which states that if a place has outgoing flows leading to different transitions, then all such transitions must fall into the same part in $G$. 

\textbf{(1.2) 
Each $\sigma_i$ is a firing sequence from $N_i$:}
The unique local start and end properties (c.f. \autoref{def:po_pattern}) ensure each subnet is executed independently from start to end within $N$. Therefore, $\sigma_i$ must be a firing sequence from $N_i$.


\textbf{(1.3) The partial order $\po$ is preserved in $\sigma$:} 
Assume $i \po j$ for any $i, j \in \{1, \dots, n\}$. Since $\po = \closure{\mathit{order}(N, G)}$, transitions from $T_i$ are executed first, producing tokens that are needed to eventually enable $T_j$. Suppose, for the sake of contradiction, that after the execution of transitions from $T_j$, transitions from $T_i$ are re-enabled (i.e., tokens are produced in $\Pre{T_i}$). Then we have two possible scenarios:
    \begin{itemize}
        \item (i) The re-enabling of $T_i$ does not depend on the completion of $T_j$ (i.e., it does not require the consumption of tokens from $\Post{T_j}$): This means that we can perform a full execution of the subnet of $T_i$ and reach the subnet of $T_j$ again before its completion, violating safeness.
        \item (ii) The re-enabling of $T_i$ depends on the completion of $T_j$ (i.e., it requires the consumption of tokens from $\Post{T_j}$): This implies the existence of a sequence of dependencies in the execution order $j \po \dots \po i$. By transitivity, $j \po i$ holds. This violates the asymmetry requirement of partial orders since $i \po j$.
    \end{itemize}    


\textbf{(2) Proof for $\{\sigma \in \shuffle_{\po}(\sigma_1 , ..., \sigma_n) \ | \ \forall_{1 \leq i \leq n} \sigma_i \in \lang(N_i)\} \subseteq \lang(N)$:}
Consider any sequence $\sigma \in \shuffle_{\po}(\sigma_1 , ..., \sigma_n)$ where $\sigma_i \in \lang(N_i)$ for $1 \leq i \leq n$. We showed that all parts $T_i \in G$ must be visited in $N$ (c.f. the proof of 1.1). Due to the unique local start and end properties in $N$ (c.f. \autoref{def:po_pattern}), each subnet can be executed independently in $N$, without violating the execution order $\po = \closure{\mathit{order}(N, G)}$. Therefore, the interleaved sequence $\sigma$ constitutes a valid firing sequence in $N$. 


\end{proof}

\subsection{Overall Correctness Guarantee}\label{sec:gua:lang}
In this section, we prove the correctness of \autoref{alg:convert_pn_to_powl}. Specifically, we show that the algorithm, if successfully producing a POWL model, then the POWL model has the same language as the input WF-net.

\begin{theorem}[Correctness]\label{thm:main_correctness_compact}
Let $N = (P, T, F)$ be a safe and sound WF-net. If \autoref{alg:convert_pn_to_powl} successfully converts $N$ into a POWL model $\powl$, then $\lang(N) = \lang(\powl)$.
\end{theorem}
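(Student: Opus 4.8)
The plan is to proceed by structural induction on the recursion tree of \autoref{alg:convert_pn_to_powl}, mirroring exactly the case analysis the algorithm performs. The induction is on the number of transitions $\card{T}$, or equivalently on the depth of the recursion; the claim to be established is that whenever the algorithm returns a non-$null$ model $\powl$ on input $N$, we have $\lang(N) = \lang(\powl)$. I would first record that every recursive call is made on a genuinely smaller safe and sound WF-net: smallness follows because each projection ($\xorproject$, $\loopproject$, $\poproject$) drops at least one transition (the partition has size $\geq 2$ in the XOR and partial-order cases, and the loop case splits $T$ into three non-empty parts), and safeness/soundness of the subnets is exactly \autoref{thm:xor_soundness_preservation}, \autoref{thm:loop_soundness_preservation}, and \autoref{thm:po_soundness_preservation}. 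This legitimizes applying the induction hypothesis to each $\powl_i$ (resp.\ $\powl_{do}, \powl_{redo}$) returned by the recursive calls.

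Next I would walk through the four branches of the algorithm. \textbf{Base case:} if $\card{T}=1$ with the net consisting of a single transition $t$ between source and sink, the algorithm returns $t$, and $\lang(N) = \{\langle \lab(t)\rangle\}$ if $\lab(t)\in\ActUniverse$ or $\{\langle\rangle\}$ if $\lab(t)=\tau$, which is precisely $\lang(t)$ by \autoref{def:lang}. \textbf{XOR branch:} the algorithm only returns here when $(N,G)$ is an XOR pattern; by the induction hypothesis $\lang(\powl_i) = \lang(\xorproject(N,T_i))$ for every $i$, so \autoref{thm:xor_lang_preservation} yields $\lang(\xor(\powl_1,\dots,\powl_n)) = \lang(N)$. \textbf{Loop branch:} similarly, the branch is taken only when a loop pattern exists; the induction hypothesis gives $\lang(\powl_{do}) = \lang(\loopproject(N,T_{do}))$ and $\lang(\powl_{redo}) = \lang(\loopproject(N,T_{redo}))$, and \autoref{thm:loop_lang_preservation} gives $\lang(\Loop(\powl_{do},\powl_{redo})) = \lang(N)$. \textbf{Partial-order branch:} taken only when $(N,G)$ is a partial order pattern; with $\po = \closure{\mathit{order}(N,G)}$ and $\lang(\powl_i) = \lang(\poproject(N,T_i))$ from the induction hypothesis, \autoref{thm:po_lang_preservation} delivers $\lang(\po(\powl_1,\dots,\powl_n)) = \lang(N)$. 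Since these four cases are the only ones in which the algorithm returns a non-$null$ value, the induction closes.

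The subtle points are essentially bookkeeping rather than mathematics, since the heavy lifting is done by the lemmas of \autoref{sec:lemmas:proj} and \autoref{sec:lemmas:pattern}. First, I must make sure the induction is well-founded: I would phrase it as strong induction on $\card{T}$ and verify the strict decrease in each recursive branch as above (noting in particular that the partition sizes guaranteed by the pattern definitions force each part to be a proper subset of $T$). Second, I must confirm that each pattern lemma's hypothesis ``$\lang(\powl_i) = \lang(\text{projection})$'' is exactly what the induction hypothesis supplies — this is immediate once we know the recursive call succeeded (returned non-$null$) and the subnet is a valid safe and sound WF-net. The one thing to be careful about is that the lemmas in \autoref{sec:lemmas:pattern} are stated for \emph{arbitrary} POWL models with the right languages, so there is no circularity with the algorithm's internal structure. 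I expect the main (minor) obstacle to be stating the induction cleanly enough that the reader sees the algorithm's control flow and the lemma chain line up one-to-one; there is no genuinely hard step remaining once the projection and pattern lemmas are in hand.
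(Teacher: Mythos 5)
Your proposal is correct and follows essentially the same route as the paper: induction on the number of transitions, with the base case handled directly and each pattern branch discharged by the corresponding structural-guarantee lemma (to justify the inductive hypothesis on the subnets) and language-preservation lemma (to conclude the equality). The paper's proof is just a terser version of the same argument.
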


\begin{proof}
We prove the theorem by induction on the number of transitions in $N$.

\begin{itemize}

\item \textbf{Base case:} The theorem trivially holds for a WF-net that contains a single transition.

\item \textbf{Inductive hypothesis:} For $n> 1$, assume the theorem holds for all safe and sound WF-net with fewer transitions than $n$ (i.e., $|T| < n$).

\item \textbf{Inductive step ($|T| = n$):} We consider the different cases in the algorithm:

\begin{itemize}

\item \textbf{XOR pattern:} Suppose an XOR pattern $(N, G)$ with $G = \{T_1, \dots, T_n\}$ is identified. For each $T_i \in G$, $N$ is projected onto $T_i$ to obtain $N_i = \xorproject(N, T_i)$.
 By \autoref{thm:xor_soundness_preservation}, each $N_i$ is a safe and sound WF-net with fewer transitions than $n$.
 By the inductive hypothesis, the POWL model $\powl_i$ obtained from $N_i$ satisfies $\lang(\powl_i) = \lang(N_i)$.
 The algorithm returns $\powl = \xor(\powl_1, \dots, \powl_n)$.
 By \autoref{thm:xor_lang_preservation}, $\lang(\powl) = \lang(N)$.

 \item \textbf{Loop or partial order pattern:} The proof is analogous to the XOR case, using the appropriate lemmas for structural guarantees (\autoref{thm:loop_soundness_preservation} or \autoref{thm:po_soundness_preservation}) and language equivalence (\autoref{thm:loop_lang_preservation} or \autoref{thm:po_lang_preservation}).



    \item \textbf{No pattern is detected:} If no pattern is detected, then the algorithm returns $null$, which does not contradict the theorem.

\end{itemize}

\end{itemize}

By induction, the theorem holds for all safe and sound WF-nets successfully converted by \autoref{alg:convert_pn_to_powl}.

\end{proof}

\subsection{Completeness Guarantee on Semi-Block-Structured WF-Nets} \label{sec:gua:redisc}
In this section, we show that \autoref{alg:convert_pn_to_powl} is complete when applied to semi-block-structured WF-nets (c.f. \autoref{def:block}).

\begin{theorem}[Completeness]\label{thm:complete}
Let $N$ be a semi-block-structured WF-net. \autoref{alg:convert_pn_to_powl} successfully converts $N$ into a POWL model that has the same language as $N$. 
\end{theorem}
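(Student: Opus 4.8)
The plan is to prove completeness by structural induction on the semi-block-structured WF-net $N$, mirroring the recursive structure of \autoref{alg:convert_pn_to_powl}. The key observation is that \autoref{thm:main_correctness_compact} already guarantees language preservation whenever the algorithm succeeds, so it suffices to show that the algorithm never returns $null$ on a semi-block-structured WF-net. Thus the entire burden is to show that, for any semi-block-structured $N$ that is not a single transition, the algorithm detects at least one pattern (XOR, loop, or partial order), and that each projection yields a strictly smaller semi-block-structured WF-net so the induction goes through.

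First I would establish the base case: a semi-block-structured WF-net with $|T| = 1$ is, by connectivity and the unique source/sink requirements, exactly the net $(\{N_{source}, N_{sink}\}, \{t\}, \{(N_{source},t),(t,N_{sink})\})$, which the algorithm handles directly. For the inductive step, I would case-split. If $|\xorpart(N)| \geq 2$, the algorithm detects an XOR pattern; here I would argue the projections $\xorproject(N, T_i)$ are again semi-block-structured — soundness and safeness come from \autoref{thm:xor_soundness_preservation}, and the bijection $\mathcal{B}$ and disjoint-subnet structure restrict cleanly to each part since, by the remark after \autoref{def:block}, the split/join pairs and their blocks sit entirely within single reachability components. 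If instead $\xorpart(N)$ is trivial, I would show $N$ is either a loop pattern or admits a nontrivial $\popart(N)$ satisfying all four conditions of \autoref{def:po_pattern}. This is where I would lean hardest on the semi-block-structure hypothesis: the "explicit decision points" and "unique mapping between split and join decision points" conditions are precisely what guarantee conditions 3 and 4 (unique local start/end) of \autoref{def:po_pattern}, because every part $T_i$ of $\popart(N)$ is bounded by matched split/join places (or is a marked-graph segment), so its entry and exit places are $\approx_{T_i}$-equivalent. The "disjoint subnets between decision points" condition similarly ensures that when a part contains a block, that block is a self-contained sub-WF-net with a single matched split/join pair, which the loop or XOR detection will handle at the next recursion level; and each $\poproject(N,T_i)$ inherits semi-block-structure via \autoref{thm:po_soundness_preservation} plus the fact that normalization only prepends/appends silent transitions without touching decision points.

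The trichotomy argument — that a trivial $\xorpart$ forces either a loop pattern or a nontrivial valid $\popart$ — is the main obstacle. I would handle it by contradiction: suppose $N$ is semi-block-structured, $|\xorpart(N)| = 1$, no loop pattern exists, and $\popart(N)$ either is trivial or violates one of conditions 2–4 of \autoref{def:po_pattern}. If $N$ has no split places at all, then by the explicit-decision-points condition it is a marked graph, and one shows its "maximal acyclic concurrent segments" induce a nontrivial partial order partition with acyclic $\closure{\mathit{order}(N,G)}$ (acyclicity using soundness of marked-graph WF-nets) — so $\popart$ succeeds, contradiction. If $N$ has a split place $p$ with matched join $\mathcal{B}(p) = p'$, the disjoint-subnet condition gives $k = |\post{p}|$ parallel sub-WF-nets between $p$ and $p'$; if the whole net lies between source and sink via this single matched pair spanning everything, then either it is a loop (when the sub-WF-nets reconnect cyclically through a silent do/redo structure, after applying the preprocessing of \autoref{fig:loop_prepro}) or the parts split into a valid partial-order partition. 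Verifying that conditions 2, 3, 4 of \autoref{def:po_pattern} cannot all be simultaneously salvageable-yet-violated requires carefully using the bijectivity of $\mathcal{B}$ and $|\post{p}| = |\pre{p'}|$ to rule out the long-term-dependency and choice-with-concurrency pathologies of \autoref{fig:neg_ex}. I expect this exhaustive structural case analysis — showing semi-block-structure is exactly strong enough to force pattern detection — to be the technically heaviest part, whereas the "projections stay semi-block-structured and shrink" claims are routine once the matched-pair structure is tracked through each projection definition.
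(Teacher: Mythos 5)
Your proposal follows essentially the same route as the paper's proof: reduce language preservation to the already-proved correctness theorem, then show by induction/recursion that a semi-block-structured net always matches the base case, an XOR or loop pattern, or a valid partial-order pattern (using the collapse-to-marked-graph plus soundness-implies-acyclicity argument for condition 2 and the blocks' unique entry/exit points for conditions 3--4), with projections preserving semi-block-structure. The paper's own proof is in fact terser than your plan on the trichotomy and projection-closure steps you flag as the heaviest, so your proposal is, if anything, a more carefully itemized version of the same argument.
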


\begin{proof}
We distinguish between three cases:
\begin{itemize}
    \item \textbf{Case 1: $N$ has a single transition:} This matches the base case of the algorithm.
    \item \textbf{Case 2: $N$ corresponds to an XOR or loop pattern:} Projecting the net on each part yields another semi-block-structured WF-net.
    \item \textbf{Case 3: $N$ does not correspond to an XOR or loop pattern:} The algorithm creates a partition $G = \popart(N)$ where transitions within the same block are grouped into the same part of the partition. The transitive closure of the execution order $\closure{\mathit{order}(N, G)}$ must form a partial order because (i) substituting each part with a single transition turn the net into a marked graph and (ii) soundness implies acyclicity for marked graph WF-nets. Since the top-level blocks have unique entry and exit points, the unique local start and end requirements of \autoref{def:po_pattern} are also met. Therefore, a partial order pattern is detected. Projecting $N$ on each part yields either (i) a base case for single transitions or (ii) a semi-block-structured WF-net that corresponds to an XOR or loop pattern for the blocks.
\end{itemize}
After projection, sub-nets are recursively handled in the same manner. Thus, the algorithm successfully produces a POWL model. The language equivalence follows by \autoref{thm:main_correctness_compact}. 

\end{proof}

\section{Implementation and Scalability Assessment}\label{sec:eval}
To assess the scalability of \autoref{alg:convert_pn_to_powl}, we implemented it, incorporating the reduction rules illustrated in \autoref{fig:prepro} and \autoref{fig:loop_prepro}. We then performed two experiments (code and data are available at \url{https://github.com/humam-kourani/WF-net-to-POWL}). In the first experiment, we utilized the process tree generator from \cite{DBLP:conf/bpm/JouckD16,DBLP:journals/bise/JouckD19} to generate $1000$ process trees. These trees were then translated into WF-nets using PM4Py \cite{DBLP:journals/simpa/BertiZS23}, resulting in a diverse set of WF-nets varying in size from $21$ to $370$ transitions and $15$ to $305$ places. In the second experiment, we used the ground truth WF-nets of the $20$ processes from \cite{DBLP:journals/corr/abs-2412-00023}, which were originally derived from POWL models. For comparison, we also applied the WF-net to process tree converter from \cite{DBLP:journals/algorithms/ZelstL20} in both experiments.

\paragraph*{Results.} In the first experiment, \autoref{alg:convert_pn_to_powl} successfully generated POWL models for all $1000$ WF-nets, which was expected since process trees represent a subclass of POWL. The experiment demonstrated the high scalability of our approach, as illustrated in \autoref{fig:ev}. Conversion times for our approach ranged from $0.002$ to $2.48$ seconds, whereas the tree-based converter from \cite{DBLP:journals/algorithms/ZelstL20} required between $0.013$ and $126$ seconds. In the second experiment, while our algorithm was successful on all WF-nets, the tree-based converter was only able to convert $17$ out of the $20$ WF-nets into process trees. In summary, our experiments highlight the advantage of our approach in both supporting a broader range of structures and its superior performance compared to the process tree-based converter.

\begin{figure}[!t]
    \centering    
    \includegraphics[width=0.64\textwidth]{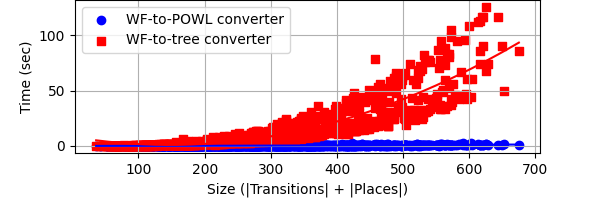}
    \caption{Comparison of conversion times between \autoref{alg:convert_pn_to_powl} and the process tree converter from \cite{DBLP:journals/algorithms/ZelstL20} on the 1000 WF-nets dataset.}\label{fig:ev}
\end{figure}

Note that the implemented algorithm is also available in ProMoAI \cite{DBLP:conf/ijcai/KouraniB0A24} (\url{https://promoai.streamlit.app/}), powering the redesign feature for improving existing process models via large language models.

\section{Conclusion}\label{sec:conc}
This paper introduced a novel algorithm for translating safe and sound Workflow Nets (WF-nets) into the Partially Ordered Workflow Language (POWL). The algorithm leverages the hierarchical structure of POWL by recursively identifying patterns within the WF-net that correspond to POWL's operators. We formally proved the correctness of our approach, showing that the resulting POWL model preserves the language of the original WF-net. Furthermore, we demonstrated the high scalability of the proposed algorithm and showed its completeness on semi-block-structured WF-nets, a subclass that contains equivalent workflow-nets for any POWL model. 

This work paves the way for broader adoption of POWL in different process mining applications. The main avenue for future work is the development of optimized process mining techniques that leverage the structural properties of POWL, such as efficient algorithms for conformance checking. Furthermore, we aim to provide a platform that allows users to visualize process models in the POWL language and implements new methods for interactive process analysis and improvement.

\bibliographystyle{plain}
\bibliography{lit}

\begin{thebibliography}{10}

\bibitem{DBLP:conf/ac/Berthelot86}
G{\'{e}}rard Berthelot.
\newblock Transformations and decompositions of nets.
\newblock In Wilfried Brauer, Wolfgang Reisig, and Grzegorz Rozenberg, editors, {\em Petri Nets: Central Models and Their Properties, Advances in Petri Nets 1986, Part I, Proceedings of an Advanced Course, Bad Honnef, Germany, 8-19 September 1986}, volume 254 of {\em Lecture Notes in Computer Science}, pages 359--376. Springer, 1986.

\bibitem{10.1007/BFb0016204}
G{\'{e}}rard Berthelot and Lri-Iie.
\newblock Checking properties of nets using transformations.
\newblock In G.~Rozenberg, editor, {\em Advances in Petri Nets 1985}, pages 19--40, Berlin, Heidelberg, 1986. Springer Berlin Heidelberg.

\bibitem{DBLP:journals/simpa/BertiZS23}
Alessandro Berti, Sebastiaan~J. van Zelst, and Daniel Schuster.
\newblock {PM4Py}: {A} process mining library for {Python}.
\newblock {\em Softw. Impacts}, 17:100556, 2023.

\bibitem{DBLP:conf/apn/Dal-Zilio20}
Silvano Dal{-}Zilio.
\newblock {MCC:} {A} tool for unfolding colored petri nets in {PNML} format.
\newblock In Ryszard Janicki, Natalia Sidorova, and Thomas Chatain, editors, {\em Application and Theory of Petri Nets and Concurrency - 41st International Conference, {PETRI} {NETS} 2020, Paris, France, June 24-25, 2020, Proceedings}, volume 12152 of {\em Lecture Notes in Computer Science}, pages 426--435. Springer, 2020.

\bibitem{desel1995free}
Jorg Desel and Javier Esparza.
\newblock {\em Free choice Petri nets}.
\newblock Number~40. Cambridge university press, 1995.

\bibitem{DBLP:journals/infsof/DijkmanDO08}
Remco~M. Dijkman, Marlon Dumas, and Chun Ouyang.
\newblock Semantics and analysis of business process models in {BPMN}.
\newblock {\em Inf. Softw. Technol.}, 50(12):1281--1294, 2008.

\bibitem{DBLP:journals/is/FavreFV15}
C{\'{e}}dric Favre, Dirk Fahland, and Hagen V{\"{o}}lzer.
\newblock The relationship between workflow graphs and free-choice workflow nets.
\newblock {\em Inf. Syst.}, 47:197--219, 2015.

\bibitem{gardner2003uml}
Tracy Gardner.
\newblock {UML} modelling of automated business processes with a mapping to {BPEL4WS}.
\newblock {\em Orientation and Web Services}, 30, 2003.

\bibitem{DBLP:conf/bpm/JouckD16}
Toon Jouck and Beno{\^{\i}}t Depaire.
\newblock {PTandLogGenerator}: {A} generator for artificial event data.
\newblock In Leonardo Azevedo and Cristina Cabanillas, editors, {\em Proceedings of the {BPM} Demo Track 2016 Co-located with the 14th International Conference on Business Process Management {(BPM} 2016), Rio de Janeiro, Brazil, September 21, 2016}, volume 1789 of {\em {CEUR} Workshop Proceedings}, pages 23--27. CEUR-WS.org, 2016.

\bibitem{DBLP:journals/bise/JouckD19}
Toon Jouck and Beno{\^{\i}}t Depaire.
\newblock Generating artificial data for empirical analysis of control-flow discovery algorithms - {A} process tree and log generator.
\newblock {\em Bus. Inf. Syst. Eng.}, 61(6):695--712, 2019.

\bibitem{DBLP:journals/corr/abs-2412-00023}
Humam Kourani, Alessandro Berti, Daniel Schuster, and Wil M.~P. van~der Aalst.
\newblock Evaluating large language models on business process modeling: Framework, benchmark, and self-improvement analysis.
\newblock {\em CoRR}, abs/2412.00023, 2024.

\bibitem{DBLP:conf/bpmds/KouraniB0A24}
Humam Kourani, Alessandro Berti, Daniel Schuster, and Wil M.~P. van~der Aalst.
\newblock Process modeling with large language models.
\newblock In Han van~der Aa, Dominik Bork, Rainer Schmidt, and Arnon Sturm, editors, {\em Enterprise, Business-Process and Information Systems Modeling - 25th International Conference, {BPMDS} 2024, and 29th International Conference, {EMMSAD} 2024, Limassol, Cyprus, June 3-4, 2024, Proceedings}, volume 511 of {\em Lecture Notes in Business Information Processing}, pages 229--244. Springer, 2024.

\bibitem{DBLP:conf/ijcai/KouraniB0A24}
Humam Kourani, Alessandro Berti, Daniel Schuster, and Wil M.~P. van~der Aalst.
\newblock {ProMoAI}: Process modeling with generative {AI}.
\newblock In {\em Proceedings of the Thirty-Third International Joint Conference on Artificial Intelligence, {IJCAI} 2024, Jeju, South Korea, August 3-9, 2024}, pages 8708--8712. ijcai.org, 2024.

\bibitem{DBLP:conf/icpm/KouraniSA23}
Humam Kourani, Daniel Schuster, and Wil M.~P. van~der Aalst.
\newblock Scalable discovery of partially ordered workflow models with formal guarantees.
\newblock In {\em 5th International Conference on Process Mining, {ICPM} 2023, Rome, Italy, October 23-27, 2023}, pages 89--96. {IEEE}, 2023.

\bibitem{powl}
Humam Kourani and Sebastiaan~J. van Zelst.
\newblock {POWL:} partially ordered workflow language.
\newblock In Chiara~Di Francescomarino, Andrea Burattin, Christian Janiesch, and Shazia Sadiq, editors, {\em Business Process Management 2023, Proceedings}, volume 14159 of {\em LNCS}, pages 92--108. Springer, 2023.

\bibitem{DBLP:journals/is/KouraniZSA25}
Humam Kourani, Sebastiaan~J. van Zelst, Daniel Schuster, and Wil M.~P. van~der Aalst.
\newblock Discovering partially ordered workflow models.
\newblock {\em Inf. Syst.}, 128:102493, 2025.

\bibitem{DBLP:conf/apn/LangnerSW98}
Peter Langner, Christoph Schneider, and Joachim Wehler.
\newblock Petri net based certification of event-driven process chains.
\newblock In J{\"{o}}rg Desel and Manuel~Silva Su{\'{a}}rez, editors, {\em Application and Theory of Petri Nets 1998, 19th International Conference, {ICATPN} '98, Lisbon, Portugal, June 22-26, 1998, Proceedings}, volume 1420 of {\em Lecture Notes in Computer Science}, pages 286--305. Springer, 1998.

\bibitem{DBLP:conf/otm/LassenA06}
Kristian~Bisgaard Lassen and Wil M.~P. van~der Aalst.
\newblock {WorkflowNet2BPEL4WS}: {A} tool for translating unstructured workflow processes to readable {BPEL}.
\newblock In Robert Meersman and Zahir Tari, editors, {\em On the Move to Meaningful Internet Systems 2006: CoopIS, DOA, GADA, and ODBASE, {OTM} Confederated International Conferences, CoopIS, DOA, GADA, and {ODBASE} 2006, Montpellier, France, October 29 - November 3, 2006. Proceedings, Part {I}}, volume 4275 of {\em Lecture Notes in Computer Science}, pages 127--144. Springer, 2006.

\bibitem{sander}
Sander J.~J. Leemans.
\newblock {\em Robust Process Mining with Guarantees - Process Discovery, Conformance Checking and Enhancement}, volume 440 of {\em LNBIP}.
\newblock Springer, 2022.

\bibitem{DBLP:journals/pieee/Murata89}
Tadao Murata.
\newblock Petri nets: Properties, analysis and applications.
\newblock {\em Proc. {IEEE}}, 77(4):541--580, 1989.

\bibitem{DBLP:conf/ac/1996petri1}
Wolfgang Reisig and Grzegorz Rozenberg, editors.
\newblock {\em Lectures on Petri Nets {I:} Basic Models, Advances in Petri Nets, the volumes are based on the Advanced Course on Petri Nets, held in Dagstuhl, September 1996}, volume 1491 of {\em Lecture Notes in Computer Science}. Springer, 1998.

\bibitem{DBLP:journals/eor/SalimifardW01}
Khodakaram Salimifard and Mike Wright.
\newblock Petri net-based modelling of workflow systems: An overview.
\newblock {\em Eur. J. Oper. Res.}, 134(3):664--676, 2001.

\bibitem{compTheorem}
Wil M.~P. van~der Aalst.
\newblock Workflow verification: Finding control-flow errors using petri-net-based techniques.
\newblock In {\em Business Process Management, Models, Techniques, and Empirical Studies}, volume 1806 of {\em Lecture Notes in Computer Science}, pages 161--183. Springer, 2000.

\bibitem{DBLP:conf/apn/Aalst21}
Wil M.~P. van~der Aalst.
\newblock Reduction using induced subnets to systematically prove properties for free-choice nets.
\newblock In Didier Buchs and Josep Carmona, editors, {\em Application and Theory of Petri Nets and Concurrency - 42nd International Conference, {PETRI} {NETS} 2021, Virtual Event, June 23-25, 2021, Proceedings}, volume 12734 of {\em Lecture Notes in Computer Science}, pages 208--229. Springer, 2021.

\bibitem{DBLP:journals/infsof/AalstL08}
Wil M.~P. van~der Aalst and Kristian~Bisgaard Lassen.
\newblock Translating unstructured workflow processes to readable {BPEL:} theory and implementation.
\newblock {\em Inf. Softw. Technol.}, 50(3):131--159, 2008.

\bibitem{DBLP:journals/topnoc/HeeSW13a}
Kees~M. van Hee, Natalia Sidorova, and Jan Martijn E.~M. van~der Werf.
\newblock Business process modeling using petri nets.
\newblock {\em Trans. Petri Nets Other Model. Concurr.}, 7:116--161, 2013.

\bibitem{DBLP:journals/algorithms/ZelstL20}
Sebastiaan~J. van Zelst and Sander J.~J. Leemans.
\newblock Translating workflow nets to process trees: An algorithmic approach.
\newblock {\em Algorithms}, 13(11):279, 2020.

\bibitem{DBLP:books/el/15/RosingWCM15}
Mark von Rosing, Stephen White, Fred Cummins, and Henk de~Man.
\newblock Business process model and notation - {BPMN}.
\newblock In Mark von Rosing, Henrik von Scheel, and August{-}Wilhelm Scheer, editors, {\em The Complete Business Process Handbook: Body of Knowledge from Process Modeling to BPM, Volume {I}}, pages 429--453. Morgan Kaufmann/Elsevier, 2015.

\end{thebibliography}

\end{document}